\documentclass[11pt,english]{article}

\usepackage{amsmath,amsthm,amssymb,amsfonts}
\usepackage{xspace}
\usepackage{url}
\usepackage{hyperref}
\usepackage[dvipsnames]{xcolor}
\usepackage{thmtools}
\usepackage{thm-restate}
\usepackage{amsmath}

\hypersetup{colorlinks=true,allcolors=blue}
\usepackage[letterpaper, left=1in, right=1in, top=0.9in, bottom=0.9in]{geometry}
\DeclareMathOperator{\Res}{Res}
\DeclareMathOperator{\Pre}{Pre}
\DeclareMathOperator{\Alg}{Alg}
\usepackage{multirow}
\usepackage{longtable}
\usepackage{booktabs}
\usepackage{thm-restate}
\usepackage{graphicx}
\usepackage{subcaption}
\usepackage{enumitem}

\usepackage[capitalise,noabbrev]{cleveref}
\usepackage{comment}

\theoremstyle{plain}
\newtheorem{theorem}{Theorem}[section]
\newtheorem{lemma}[theorem]{Lemma}
\newtheorem{claim}[theorem]{Claim}

\usepackage{parskip}

\theoremstyle{definition}
\newtheorem{definition}[theorem]{Definition}

\usepackage[ruled,vlined]{algorithm2e}

\usepackage{mathtools}

\theoremstyle{remark}

\newtheorem*{remark*}{Remark}

\usepackage[obeyFinal,textsize=footnotesize]{todonotes}

\title{Online Preemptive Matching Revisited}
\author{
Peter Kiss\thanks{This research was funded in whole or in part by the Austrian Science Fund
(FWF) 10.55776/ESP6088024}$^{1}$,
Mohammad Sharifi$^2$ \\
\\
$^{1}$University of Vienna\\
$^{2}$Sharif University of Technology
}
\date{}

\begin{document}
\maketitle
\begin{abstract}

We study the online preemptive matching problem, in which the edges of a graph arrive sequentially and the algorithm must maintain a matching by accepting or rejecting arriving edges and possibly discarding previously accepted ones. We prove a new upper bound of $0.5661$ on the competitive ratio achievable for the problem. This bound applies to arbitrary randomized algorithms, bipartite graphs and if we allow the algorithm to output a fractional solution.

Our result improves upon the strongest previously known upper bound of $2-\sqrt{2} \approx 0.585$, due to Huang et al. [SODA'19]. Previous hardness constructions relied on edge sequences described by vertex arrivals where each arriving vertex reveals its edges to yet unvaried vertices. Under such sequences, Huang et al. showed that there exists a non-preemptive online algorithm with competitive ratio $\sim0.567$ (or $2-\sqrt{2}$ for fractional solutions). Consequently, our hardness construction is the first result which shows hardness for instances where the optimal algorithm employs preemption.

\end{abstract}

\newpage

\section{Introduction}

We study the problem of maintaining a matching in an online graph under preemption (sometimes referred to as online matching with free disposal). The input is a simple graph $G = (V,E)$ whose edges arrive one by one in an adversarial order. Upon the arrival of an edge $e \in E$, the algorithm must decide whether to incorporate $e$ into its current matching or to discard it permanently. In contrast to the classical online setting, the algorithm is also allowed to \emph{preempt} previously selected edges, i.e., to remove edges arbitrarily from its current output. The objective is to maximize the size of the final matching. An algorithm is said to be \emph{$\alpha$-competitive} if, for every input sequence, the size of the matching it produces is at least an $\alpha$-fraction of the size of an optimal offline solution.

Since its introduction to the online algorithms literature in the seminal work of Karp, Vazirani, and Vazirani~\cite{KarpVV90}, the online matching problem has received extensive attention. This interest stems both from its appealing theoretical nature and from its practical relevance (see \cite{roth2023online}), most notably in the modeling of online advertisement markets. The problem in the absence of preemption is better understood. Gamlath et al.~\cite{GamlathKMSW19} showed that under adversarial edge arrivals, no algorithm can achieve a competitive ratio exceeding $1/2$, a bound attained by the simple greedy algorithm. Much of the literature has focused on models in which vertices arrive sequentially and reveal some or all of their incident edges. Depending on the precise model, strictly better than $1/2$-competitive algorithms are known~\cite{0002KTWZZ18,WangW15,GamlathKMSW19,KarpVV90,tang2024improved, huang2020fully} (for a more complete list see \cite{huang2024online}).

Given the importance of the problem, many works have explored relaxations of the purely online model. In these settings, the algorithm may receive additional information about the underlying graph~\cite{AntoniadisGKK23,ChooGL024,LiYR23}, or may be allowed to perform limited non-online modifications to its output~\cite{AngelopoulosDJ20,0004KS20,BernsteinHR19}. The preemptive variant of online matching, the focus of this paper, is both a natural relaxation of the classical model and a source of theoretical and practical applications. Notably, state-of-the-art hardness results for the semi-streaming matching problem rely on hardness constructions for preemptive online matching~\cite{Kapralov13,Kapralov21}.

The current state of the art for preemptive matching is (arguably) somewhat unsatisfactory. As in the purely online setting, no algorithm with competitive ratio strictly better than $1/2$ is known under adversarial edge arrivals. In contrast, existing hardness results for preemptive matching focus on vertex arrival models for which competitive ratios strictly exceeding $1/2$ are achievable even without preemption.

\paragraph{Existing Work.}
Define an \emph{online vertex-future edge arrival} sequence as follows. Vertices arrive sequentially, and upon the arrival of a vertex $v$, all edges between $v$ and vertices that have not yet arrived are revealed. At this point, the algorithm must decide whether and how to match $v$. This model is at least as hard as the fully online vertex arrival model (introduced by Huang et al. \cite{0002KTWZZ18}, see \cite{FullyOnlineIntegral, tang2024improved}), in which vertices have arrival times and deadlines (by which they must be matched) and only reveal edges to previously arrived neighbors (for a brief discussion on this, see Section~\ref{sec:ranking}). At the same time, it generalizes the classical online bipartite matching problem, in which vertices arrive on only one side of the bipartition.

Existing hardness results for preemptive matching rely on a key observation: under vertex-future edge arrival sequences, preemption is never beneficial. To see this, consider the first vertex $v$ whose incident edges are revealed, and suppose the algorithm matches $v$ to some neighbor $u$. By definition, no further edges incident to $v$ will arrive. Thus, the only possible reason to preempt the edge $(v,u)$ is to rematch $u$ later, but such a rematching cannot increase the size of the output.

Consequently, any hardness result that applies to purely online algorithms under vertex-future edge arrivals also holds for preemptive algorithms. Based on this observation, the classical hardness bound of $1 - 1/e \approx 0.632$ for online bipartite matching due to Karp et al.~\cite{KarpVV90} was the first known upper bound for online preemptive matching. Subsequent works by Epstein et al.~\cite{EpsteinLSW18Weaker} and Huang et al.~\cite{HuangPTTWZ19Stronger} extended these ideas to bipartite graphs with vertex arrivals on both sides of the bipartition, establishing upper bounds of $1/(1+\ln 2) \approx 0.59$ and $2-\sqrt{2} \approx 0.585$, respectively.

On the algorithmic side, Huang et al.~\cite{HuangPTTWZ19Stronger} showed that there exists a non-preemptive online algorithm with competitive ratio approximately $0.567$ for the fully online vertex arrival model (or $2-\sqrt{2}$ in the fractional setting). Although this model is not equivalent to vertex-future edge arrivals, it is straightforward to show (see Section~\ref{sec:ranking}) that these positive results extend to our formulation.

This leads to a notable gap in the literature. On the one hand, no algorithm is known to leverage preemption to surpass the greedy $1/2$ competitive ratio under adversarial edge arrivals. On the other hand, all existing hardness results circumvent the preemptive nature of the problem by relying on vertex-future edge arrivals, and therefore cannot rule out the existence of a better than $1/2$ competitive algorithm.

\subsection{Our Result}

Our main contribution is an improved upper bound for the online preemptive matching problem.

\begin{theorem}
\label{thm:main}
Any online preemptive matching algorithm has competitive ratio at most $0.5661$.
\end{theorem}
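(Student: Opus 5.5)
We will prove \cref{thm:main} with Yao's principle: we exhibit a distribution over bipartite edge-arrival instances and show that every deterministic fractional preemptive algorithm obtains expected matching size at most $0.5661$ times the optimum. Working with fractional algorithms is without loss of generality, since a randomized integral algorithm induces a fractional one whose value is the marginal edge probabilities. In the fractional model the state after each arrival is a vector $x$ in the bipartite matching polytope. An arriving edge $e=(u,v)$ may receive weight $x_e \le 1-x_u-x_v$ only after the algorithm has freed capacity at $u$ or $v$, by lowering weights of earlier edges; lowered weight can never be increased again. Restricting to bipartite graphs keeps the polytope described by the degree constraints alone, so the analysis reduces to tracking vertex loads.

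The construction must force preemption to matter. By the observation in the introduction, vertex-future edge arrivals are useless for this, and the $2-\sqrt2$ bound of Huang et al.\ is tight in that regime. We will therefore build an adaptive, multi-phase instance in the spirit of the semi-streaming hardness gadgets of Kapralov. The phases are as follows.
\begin{enumerate}[label=(\roman*)]
\item \emph{First phase.} Reveal a large, symmetric structure, for instance a complete bipartite graph $K_{n,n}$ or a union of perfect matchings on a ``middle'' layer. By symmetrization (averaging over a random relabeling of the vertices) we may assume the algorithm spreads its weight uniformly.
\item \emph{Second phase.} Attach fresh pendant vertices to a random subset of the middle vertices, so that the optimum changes to an augmenting structure. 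The algorithm must decide how much weight to preempt from old edges to pick up the pendant edges.
\item \emph{Further phases.} Repeat this with a continuum of stages indexed by $t\in[0,1]$: at each stage a fraction of the still ``loaded'' vertices receive new pendant neighbours. Any weight the algorithm preempts is lost if the adversary then stops, and any weight it keeps is lost if the adversary continues.
\end{enumerate}
Symmetrization reduces the algorithm's behaviour to a few functions of $t$: the load on each vertex class and the amount preempted so far.

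Next we will write the algorithm's value at each possible stopping time and the corresponding optimal value. We then impose that the ratio is at least $c$ at every stopping time, which gives a system of differential (or integral) inequalities. Showing that this system is infeasible for $c>0.5661$ proves the theorem. Concretely, we will derive an ODE for the algorithm's best response, solve it in closed form or numerically with rigorous error bounds, and find the threshold constant. To make the theorem hold even against fractional algorithms, the instance must also be robust: every inequality must follow from the degree constraints only, with no integrality used.

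The main obstacle will be the design step. We need a gadget in which the optimum genuinely requires preemption, while the adversary still has a stopping rule that penalizes both preempting and not preempting, and the resulting one-parameter family must beat $2-\sqrt2$. Our first attempt will be to take the Huang et al.\ $2-\sqrt2$ instance as the base phase and append a preemption-forcing second phase to the vertices it leaves matched. We will then optimize the phase proportions numerically until the bound drops to $0.5661$, and afterwards verify the inequalities rigorously, using interval arithmetic or an explicit dual certificate.
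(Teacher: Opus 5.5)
What you have written is a research plan rather than a proof, and the gap is exactly the step you call the design step. No instance is fixed: phases (i)--(iii) are only suggestions. No inequality is derived, and $0.5661$ is not the output of any computation you describe. Tuning phase proportions ``until the bound drops to $0.5661$'' presupposes that your family reaches that value, and nothing supports this.

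The one mechanism you do state also fails as written. Moving preempted weight from an old edge onto a new pendant edge at the same vertex leaves the current value unchanged, so preempted weight is not ``lost if the adversary stops''. What an adversary can penalize is \emph{where} the weight ends up. (Also, the capacity constraint for an arriving edge is $x_e\le\min(1-x_u,1-x_v)$, not $1-x_u-x_v$.)

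Similarly, appending a phase to the Huang et al.\ instance gives nothing quantitative unless you first show that an algorithm which hedges in the base phase pays for it. Such an algorithm leaves the tight vertices slack in anticipation of later edges. The paper proves exactly this trade-off for the simpler Epstein et al.\ instance: if the tight vertices have average weight $1-x$, the ratio is at most $y=(1-x/2)/(1+\ln 2)$ (Lemma~\ref{lem:tightbaseinstance}). It avoids the $2-\sqrt{2}$ instance because that analysis is more intricate.

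The second missing ingredient is how the constant is actually obtained. The paper freezes $T_2$ and joins a fresh set $U$ with $|U|=\lambda n$ completely to $T_1$. By Claim~\ref{cl:loosenodes}, the weight on edges between $T_1$ and $L_1\cup U$ is $n(2y-1+2x)$ no matter how the algorithm preempts. Hence the heaviest $\lambda n$ vertices of $L_1\cup U$ carry average frozen weight $r\ge(2y-1+2x)/(\lambda+1)$ (Claim~\ref{cl:Uweight}).

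An arbitrary $\alpha$-hard instance is then inserted as a black box on $n$ copies of these heavy blocks, via the meta-graph Lemma~\ref{lem:meta}. That lemma caps the algorithm's additional weight at $(1-r)\alpha kN$ for $N$ pairs of blocks of size $k$. The result is an $\alpha'$-hard instance (Lemma~\ref{lem:recursion}). Claim~\ref{numeric} with $\lambda=129$ gives $\alpha'\le\alpha-10^{-5}$ whenever $\alpha\in[0.5661,1/(1+\ln 2)]$, so iterating from the $1/(1+\ln 2)$ instance proves the theorem. Your single-shot ODE has no counterpart to this self-improving recursion.

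Two further points work against your approach. Yao's principle with a random pendant subset gives up the adaptivity these constructions need: Epstein et al.\ move the currently lightest vertex to the loose side. Adaptivity is legitimate against a deterministic fractional algorithm, whereas random relabeling makes loads uniform only in expectation. Finally, Kapralov's gadgets are derived from vertex-future-edge instances, where preemption is useless, so by themselves they cannot go below $2-\sqrt{2}$.
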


Theorem~\ref{thm:main} holds even if the algorithm is randomized or outputs a fractional matching, and if the input graph is bipartite. Note that its a folklore fact that for any randomized integral algorithm for the problem there exists a deterministic fractional one with the same competitive ratio. A key novelty of our approach is that we construct instances on which an optimal algorithm \emph{does} make use of preemption, thereby more faithfully capturing the inherent difficulty of the model than prior hardness constructions.

Our hard instance underlying Theorem~\ref{thm:main} is based on an edge arrival sequence that can be viewed as a special case of a \emph{$k$-vertex-future edge arrival} model. In this model, each vertex appears in $k$ distinct arrival phases, and upon each appearance it reveals a subset of its incident edges to vertices that have appeared fewer times. For our results, it suffices to consider $k = O(1)$; in fact, choosing $k = 2$ is already sufficient to improve upon the previous hardness bound of $2-\sqrt{2}$.

Finally, we note that Kapralov’s approximation hardness results for semi-streaming matching~\cite{Kapralov13,Kapralov21} rely on preemptive matching hardness constructions derived from vertex-future edge instances due to Karp et al.~\cite{KarpVV90} and Epstein et al.~\cite{EpsteinLSW18Weaker}. Since our constructions constitute only a mild relaxation of the model, we hope they may also serve as a useful tool for proving stronger hardness results in the semi-streaming setting.

\paragraph{Concurrent Work.} Concurrently and independently with our work, Assadi, Xiang and Jiang have listed a paper titled 'Semi-Streaming Matching in a Single Pass: A New Framework for Lower Bounds via Blueprints' (appearing at STOC26) on their website which might have implications for the preemptive model.

\paragraph{Related Work.} Preemptive matching is known to admit better then $1/2$-approximate algorithms for structured inputs. Jiang and Zhang \cite{JiangZ24}, Chiplunkar et al. \cite{DBLP:conf/esa/ChiplunkarTV15}, Tirodkar and Vishwanathan \cite{DBLP:conf/cocoon/TirodkarV17} and Buchbinder et al. \cite{DBLP:journals/algorithmica/BuchbinderST19} have studied the problem for acyclic graphs. Pashkovich and Snow \cite{DBLP:conf/waoa/PashkovichS25} have considered inputs with bounded maximum degree and derived positive results beating the greedy algorithm even in the lack of preemption.


\section{Technical Overview}

Throughout this paper, we study deterministic online preemptive fractional matching algorithms. In the fractional setting, the algorithm maintains a fractional matching by assigning weights $w : E \rightarrow [0,1]$ to edges, subject to the constraint that the total weight of edges incident to any vertex is at most~$1$. The objective is to maximize the sum of the weights of all edges. Upon the arrival of an edge, the algorithm may assign it an arbitrary weight satisfying this constraint. The algorithm is also allowed to preempt edge weights, i.e., to reduce the weight of any edge at any time. 

We say that a graph together with an edge-arrival instance is \emph{$\alpha$-hard} if $\alpha$ is the best competitive ratio achievable by any algorithm on this instance. Note that for bipartite graphs there is no integrality gap between the maximum size integral and fractional matchings. From a hardness perspective its also important to note that for any randomized integral preemptive algorithm there exists a deterministic fractional variant with the same competitive ratio.

For ease of exposition, we assume that in addition to adding vertices and edges, the adversary may also \emph{freeze} vertices. By freezing a vertex~$v$, the adversary guarantees that no additional edges incident to~$v$ will arrive in the future. Consequently, once a vertex is frozen, the algorithm has no incentive to reduce the weight of edges incident to~$v$. We refer the reader to Section~\ref{sec:frozen} for a formal discussion.

\subsection{Warm-up: Improving upon $2-\sqrt{2}$ Hardness}

\paragraph{Existing Constructions.}
Our approach builds on the following observation: existing hard instances for this problem do not merely establish an upper bound on the competitive ratio, but also force the algorithm to produce solutions with a very rigid structure. In particular, in both~\cite{EpsteinLSW18Weaker,HuangPTTWZ19Stronger}, the $4n$ vertices of the bipartite input graph can be partitioned into four equal-sized independent sets: $T_1,T_2$ (the \emph{tight} vertices) and $L_1,L_2$ (the \emph{loose} vertices). Edges exist only between the pairs $T_1$--$T_2$, $T_1$--$L_1$, and $T_2$--$L_2$ (see Figure~\ref{fig1}), and the graph admits a perfect matching between tight vertices $T_1 \cup T_2$ and loose vertices in $L_1 \cup L_2$.

Although not proven in their respective papers, it can be shown that both constructions share the following property: in order to achieve the optimal competitive ratio~$\alpha$, the algorithm must output a solution in which the tight vertices have average weight arbitrarily close to~$1$. Therefore, if our goal is to prove any strict separation from the previous best upper bound $\alpha = 2-\sqrt{2}$, we may safely assume that, when applied to these instances, the algorithm produces a solution in which the tight vertices have average weight close to~$1$.

\paragraph{Exploiting Structure.}
For simplicity, suppose we insert such an instance and the algorithms output is completely uniform: vertices in $T_1 \cup T_2$ have average weight~$1$, while vertices in $L_1 \cup L_2$ have average weight $2\alpha - 1$. Our goal is to penalize the algorithm for assigning weight~$1$ to the tight vertices. We freeze all vertices in $T_2$, and introduce a new set of vertices~$U$ of size $|T_1|$, connected to $T_1$ by a complete bipartite graph. Since vertices in $T_2$ are frozen, the algorithm doesn't want to preempt weight from edges incident to them. Consequently, the only way to assign weight to edges between $T_1$ and~$U$ is by preempting weight from edges between $T_1$ and $L_1$.

Regardless of the algorithm’s preemptive decisions, once $U$ and its incident edges are introduced, either the vertices in $L_1$ or those in~$U$ must have average weight at least $(2\alpha-1)/2$ (we can safely assume the algorithm keeps the vertices of $T_1$ tight). For simplicity, assume that the vertices in $L_1$ end up with uniform average weight $(2\alpha-1)/2$. We then freeze vertices of $T_1$ and insert a copy of the initial hard instance between the vertices of~$L_1$.

At this stage, the graph admits a perfect matching: vertices in $T_1$ can be matched to $U$, and vertices in $L_1$ can be matched among themselves. Thus, the maximum matching size is $5n/2$. Prior to the second insertion of the initial instance within $L_1$, the algorithm cannot increase its total weight by shifting weight from $T_1$--$L_1$ edges to $T_1$--$U$ edges. Hence, before the final step, the algorithm’s total weight is at most $2\alpha n$, since the initial instance is $\alpha$-hard.

When the $\alpha$-hard instance is recursively inserted into $L_1$, the vertices of $L_1$ already carry 'frozen' average weight $(2\alpha-1)/2 = r$. Intuitively, the algorithm can increase this weight by at most $\alpha(1-r)$, as the subgraph induced by $L_1$ is itself $\alpha$-hard. Therefore, the competitive ratio of the output at this point is upper bounded by
\[
\frac{2\alpha n + \frac{\alpha n}{2}\left(1 - \frac{2\alpha-1}{2}\right)}{5/2 n} < \alpha. 
\]

\begin{figure}[h]
    \centering
    \includegraphics[width=0.9\textwidth]{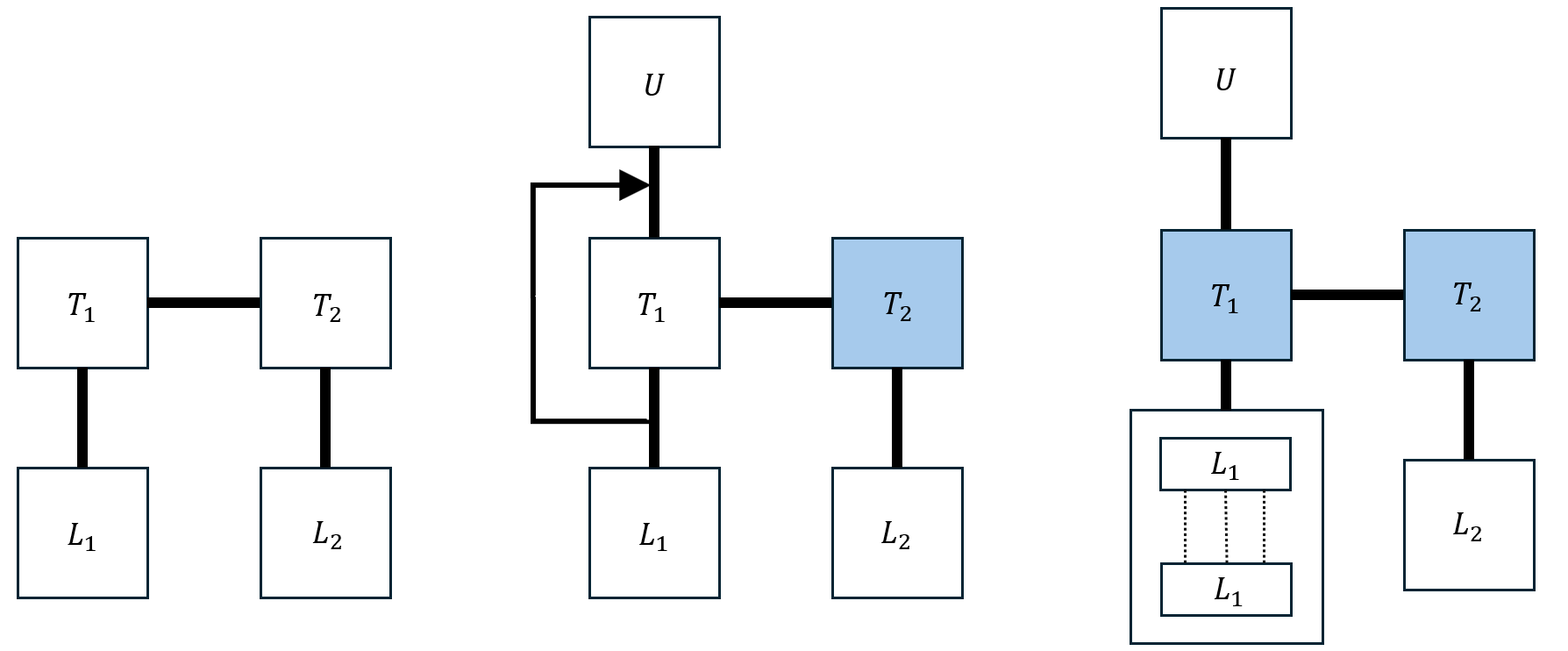}
    \caption{Left: Tight-loose structure Middle: Preemption after inserting U Right: Recursion in $L_1$}
    \label{fig1}
\end{figure}

\subsection{Showing a Strict Separation}

A primary goal of this paper is to demonstrate that online preemptive matching is strictly harder than online matching under vertex-future edge arrivals. Unfortunately, the best known competitive ratio for the latter, when restricted to integral algorithms, is approximately $0.567$, as shown by Huang et al.~\cite{HuangPTTWZ19Stronger}. Consequently, it is not sufficient to prove an arbitrary constant separation from the previous hardness bound of $2-\sqrt{2}$. To overcome this barrier, our construction employs several additional techniques, which we now outline at a high level.

\paragraph{Working with Non-Tight Tight Vertices.}
In the simplified construction above, the algorithm may slightly deviate from the assumed structure. We therefore establish a strong relationship between the extent to which tight vertices deviate from being fully tight and the resulting loss in competitive ratio. In Section~\ref{sec:bound}, we show that if the algorithm produces a solution in which tight vertices have weight at most $1-x$, then its competitive ratio degrades to $\alpha(1-x/2)$, for the $\alpha = 1/(1+\ln 2) \approx 0.59$ hard instance of Epstein et al.~\cite{EpsteinLSW18Weaker}. This requires reformulating their analysis in the fractional setting and generalizing it to algorithms that do not necessarily optimize their competitive ratio.

\paragraph{Scaling $|U|$.}
Further improvements come from scaling the size of $U$ to $\lambda |T_1|$ for an integer $\lambda > 1$. In this case, during the recursive step we insert the $\alpha$-hard instance among the $\lambda |T_1|$ heaviest vertices in $U \cup L_1$. Intuitively, increasing the size of $U$ causes a larger fraction of the initial weight of $L_1$ to reside on vertices that block the algorithm’s ability to improve its solution when the hard instance is inserted for the second time.

\paragraph{Optimizing the Recursive Argument.}
Observe that the final step of the construction does not rely on the internal structure of the underlying $\alpha$-hard instance. Instead, the argument implies the existence of a strictly harder instance with competitive ratio $\alpha' < \alpha$. This means the resulting hard instance can be substituted into the last step of the construction, allowing for a recursive formalization. We formalize this recursion scheme in Lemma~\ref{lem:recursion}, and in Section~\ref{sec:concludingmain} we approximate the strongest competitive ratio bound achievable by this approach.

\paragraph{Further Discussion.}
The upper bound of $0.5661$ established in this paper can likely be improved through a more refined analysis. Since our primary goal was to demonstrate a strict separation compared to the limitations of existing approaches using the simplest possible construction, we did not fully explore all possible approaches. For instance, this is why Lemma~\ref{lem:tightbaseinstance} relies on the simpler (but weaker) hard instance of~\cite{EpsteinLSW18Weaker}, rather than the more intricate state-of-the-art construction of~\cite{HuangPTTWZ19Stronger}.

We would like to highlight the fact that the previous upper bound of $2-\sqrt{2}$ by Huang et al. \cite{HuangPTTWZ19Stronger} is a constant that serves as a significant barrier in the approximation for the matching problem in multiple computational models. Namely, Konrad and Naidu \cite{DBLP:conf/approx/KonradN21} have shown a $2$-pass semi-streaming streaming algorithm and multiple papers \cite{DBLP:conf/soda/Behnezhad23, DBLP:journals/jacm/BhattacharyaKSW24, DBLP:conf/soda/AzarmehrBR24} have derived fully dynamic poly-logarithmic update time algorithms with the same approximation ratio.

It might also not be instantly clear why is it the case that if we are to insert an $\alpha$-hard instance between in independent set of size $n$ with frozen weight $r$ the algorithm may only improve its output by an additive factor of $\alpha(1-r)$ if the underlying vertices don't share the same frozen weight. We overcome this difficulty through creating multiple copies of the instance and treating the independent sets as meta vertices (for a formal derivation see Section~\ref{sec:meta}).

Finally, the reader may observe that the simplified construction above yields a non-bipartite input graph. In the full construction, instead of inserting the $\alpha$-hard sub-instance within $U$, we insert it between $U$ and a corresponding set $U'$ belonging to a copy of the initial instance (see Section~\ref{sec:recursion}), thereby preserving bipartiteness.

\section{Preliminaries}

Let $G = (V,E)$ be a graph. A \emph{fractional matching} in $G$ is a function
$x : E \to [0,1]$ such that for every vertex $v \in V$,
\[
\sum_{e \in \delta(v)} x_e \le 1,
\]
where $\delta(v)$ denotes the set of edges incident to $v$. We will refer to the weight of a vertex $v$ as $x_v = \sum_{e \in \delta(v)} x_e$. We will sometimes refer to the total weight of edges incident on a set of vertices $S \subseteq V$ by $x_S = \sum_{v \in S}x_v$.

\medskip

\begin{definition}[Online Preemptive Fractional Matching with General Edge Arrival]
Let $G=(V,E)$ be a graph whose edges arrive online in a sequence
$\sigma = (e_1, e_2, \ldots, e_T)$. An \emph{online preemptive fractional
matching algorithm} $\mathcal{A}$ maintains a fractional matching $x : E \to
[0,1]$ in an online fashion as follows:

\begin{enumerate}
    \item \textbf{Online arrival:} At each step $t$, the edge $e_t \in E$ arrives,
    revealing its endpoints. The algorithm
    must decide on a fractional assignment $x_{e_t} \in [0,1]$ for this edge.

    \item \textbf{Fractional matching constraint:} At all times, the algorithm
    maintains a fractional matching:
    \[
        \sum_{e \in \delta(v)} x_e \le 1 \quad \text{for all } v \in V,
    \]
    where $\delta(v)$ denotes the set of edges incident to vertex $v$ whose
    fractional weight has been assigned so far.

    \item \textbf{Preemption:} The algorithm is allowed to \emph{decrease} the
    fractional weight $x_e$ of any previously assigned edge $e$ at any time.
\end{enumerate}

The goal of the algorithm is to maximize the total fractional matching value
\[
\sum_{e \in E} x_e,
\]
subject to the above constraints.
\end{definition}
\begin{definition}
Let $\mathcal{A}$ be an online algorithm for an online matching problem, and
let $\sigma = (e_1, e_2, \ldots, e_T)$ be an input sequence of arriving edges.
For each $t = 1, \ldots, T$, let $\mathrm{ALG}(\sigma_t)$ and $\mathrm{OPT}(\sigma_t)$
denote the values of the matching produced by $\mathcal{A}$ and by an optimal
offline algorithm on $\sigma_t$, respectively.

The competitive ratio of $\mathcal{A}$ on the sequence $\sigma$ is defined as
\[
\mathrm{CR}_{\mathcal{A}}(\sigma)
\;=\;
\min_{1 \le t \le T}
\frac{\mathrm{ALG}(\sigma_t)}{\mathrm{OPT}(\sigma_t)}.
\]
\end{definition}

\section{Our Hard Construction}
\label{sec:bound}

In this section, we present our hard construction for the online preemptive fractional matching problem and its analysis. We begin by introducing several formal definitions that will be used throughout the remainder of the section.

\medskip

\begin{definition}
\label{def:tight-loose}
A graph $G = (V,E)$ on $4n$ vertices is said to have a \emph{tight-loose structure} if $V$ can be partitioned into four equal-sized, disjoint independent sets $T_1,T_2,L_1,L_2$ such that:
\begin{itemize}
    \item There exist perfect matchings in $G$ between $T_1$ and $L_1$, and between $T_2$ and $L_2$, and
    \item The edge set satisfies
    \[
        E = E[T_1 \times T_2] \;\cup\; E[T_1 \times L_1] \;\cup\; E[T_2 \times L_2].
    \]
\end{itemize}
\end{definition}

Much of our analysis focuses on specific edge arrival sequences on tight--loose graphs, under which any algorithm is forced to output a highly structured solution with weak competitive guarantees. This behavior is formalized by the following definition.

\medskip

\begin{definition}
\label{def:alphatight}
Let $G = (T_1 \cup T_2 \cup L_1 \cup L_2, E)$ be a tight-loose structured graph on $4n$ vertices together with an associated edge arrival sequence. We say that this pair forms an \emph{$(\alpha,f)$-tight-loose online preemptive fractional matching instance} for constant $\alpha \in [1/2,1]$ and function $f : [0,1] \rightarrow [0,1]$ if any online preemptive fractional matching algorithm observing the arrival sequence must output a fractional matching $w$ satisfying the following condition:
\begin{itemize}
    \item If the total weight assigned to vertices in $T_1 \cup T_2$ is
    \[
        w_{T_1 \cup T_2} = 2n(1-x)
    \]
    for some $x \in [0,1]$, then the total weight assigned to vertices in $L_1 \cup L_2$ satisfies
    \[
        w_{L_1 \cup L_2} \;\le\; 2n \cdot \bigl( 2\alpha(1-f(x)) -1 + x + o(1) \bigr).
    \]
\end{itemize}

Equivalently, if the algorithm assigns an average weight of $1-x$ to vertices in $T_1 \cup T_2$, then the resulting solution can be at most $\alpha(1-f(x)) + o(1)$ competitive. We assume throughout that the algorithm is aware of the bipartite decomposition of the input graph.
\end{definition}

We next formalize the ultimate objective of this section, namely the construction of edge arrival sequences under which no algorithm can achieve a strong competitive ratio.

\medskip

\begin{definition}
\label{def:alpha-hard}
A bipartite graph $G = (V,E)$ together with an associated edge arrival sequence is called an \emph{$\alpha$-hard online preemptive fractional matching instance} if every online preemptive fractional matching algorithm observing the sequence must output a solution whose competitive ratio is at most $\alpha + o(1)$ and $G$ admits a perfect matching. As before, we assume that the algorithm is aware of the bipartite decomposition of the input graph. 
\end{definition}

\paragraph{Outline of the Argument.}
We now state two lemmas which together imply our main result. The first lemma establishes the existence of a tight-loose instance with suitable parameters. Its proof appears in Section~\ref{sec:tightstructure}.

\medskip

\begin{lemma}
\label{lem:tightbaseinstance}
There exists a $\bigl(\tfrac{1}{1+\ln 2},\, x/2\bigr)$-tight--loose online preemptive fractional matching instance.
\end{lemma}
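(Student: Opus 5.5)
We will take the hard instance of Epstein et al.~\cite{EpsteinLSW18Weaker} and redo its analysis in the fractional setting, keeping track of the tight weight $1-x$ instead of assuming the algorithm is optimal. The goal is to show $w_{L_1\cup L_2}\le 2n(2\alpha(1-x/2)-1+x+o(1))$ with $\alpha=1/(1+\ln 2)$. Equivalently, the total weight $w_{T_1\cup T_2}+w_{L_1\cup L_2}$, divided by $\mathrm{OPT}=2n$, is at most $\alpha(1-x/2)+o(1)$ up to the stated rearrangement.

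\textbf{The instance.} Vertices of $T_1$ and $T_2$ are revealed in $n$ rounds, in the style of an upper-triangular construction, with $t_1^i\in T_1$ and $t_2^i\in T_2$ in round $i$. The edges between $T_1$ and $T_2$ form a "staircase": $t_1^i$ is adjacent to $t_2^j$ for $j\ge i$, in a vertex-future fashion. After the $T_1$--$T_2$ edges of a vertex have arrived, its pendant loose edge $t_1^i\ell_1^i$ (respectively $t_2^i\ell_2^i$) arrives and the vertex is frozen. Because of the perfect matching to $L_1\cup L_2$, $\mathrm{OPT}=2n$. We randomize over (equivalently, average over) permutations so that, by a symmetrization argument, we may assume a deterministic fractional algorithm acts symmetrically. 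Its state after round $i$ is then described by a single function $y(i/n)$, the weight placed on $T_1$--$T_2$ edges by vertices arriving at time $i$. This is the standard reduction for fractional algorithms on symmetric instances. By the observation in the introduction, preemption does not help under vertex-future arrivals, and freezing is justified by Section~\ref{sec:frozen}. So we may restrict to non-preemptive, symmetric strategies.

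\textbf{Continuous relaxation.} Let $y(t)$ be the $T$--$T$ weight claimed by tight vertices arriving at time $t\in[0,1]$. When a tight vertex's loose edge arrives, it tops up to at most $1$. The loose weight it receives is therefore $\min(1-y(t), \text{budget})$, and the budget is limited by how much $T$--$T$ weight the adversary has made "wasted". Writing $x$ for the average deficit $\int_0^1 (1-\text{weight}(t))\,dt$, total weight becomes a functional $\Phi(y)$ subject to the feasibility constraint from the staircase. That constraint has the form $\int_t^1 y(s)\,ds/(1-s)\le$ (available partner capacity), and it produces the $\ln 2$ in the optimum. We then solve
\[
\max_y \Phi(y) \quad\text{subject to feasibility and average tight deficit } x.
\]
At $x=0$ this must recover $\alpha=1/(1+\ln 2)$, which serves as a sanity check.

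\textbf{Main obstacle.} The main obstacle is proving the linear dependence, namely that allowing deficit $x$ never gains more than $x$ in the $L$-part, net of the $\alpha x$ loss. Concretely, we need
\[
w_{T}+w_{L}\le 4n\alpha(1-x/2)-2n+2nx+\ldots,
\]
which rearranges to the claimed inequality. We would attempt this by an LP-duality / primal-dual certificate for the Epstein et al.\ bound. That bound is proven by exhibiting dual variables (a factor-revealing LP) certifying that the total weight is at most $2n\alpha$. We would add one constraint, "$w_T=2n(1-x)$", with multiplier $\mu$, and show that $\mu=\alpha$ (or less) is dual-feasible. Leaving a tight vertex unsaturated by $\delta$ lets at most $\delta$ extra loose weight appear, but it also forfeits $\delta$ of tight weight. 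Since $\alpha$ is the per-unit efficiency of the certificate, the net effect is a loss of $\alpha\delta/2$ normalized per $2n$, giving $f(x)=x/2$. Checking dual feasibility of this extended certificate across all rounds (including the boundary terms near $t=1$) is the step we expect to be delicate. Finally, the $o(1)$ comes from discretization as $n\to\infty$.
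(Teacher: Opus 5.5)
There is a genuine gap, and it starts with the instance. In your construction the loose vertices are pendant leaves. Their single edge $t_1^i\ell_1^i$ (resp.\ $t_2^i\ell_2^i$) arrives after the $T_1$--$T_2$ edges of the partner, so at that moment the algorithm knows exactly which edge an optimal solution uses. A preemptive algorithm can put weight $1$ on $t_1^i t_2^i$ when $t_1^i$ arrives, and then shift all of $t_1^i$'s weight onto $t_1^i\ell_1^i$ when the pendant edge appears. This never lowers the total and frees the $T_2$ side, and one checks that it keeps $\mathrm{ALG}=\mathrm{OPT}$ at every prefix. So no bound $\alpha(1-x/2)$ with $\alpha<1$ can hold for your instance. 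Averaging over permutations of the staircase does not change this, because an optimal solution never uses a $T_1$--$T_2$ edge. (Revealing the pendant edge after $t_1^i$'s other edges is also not a vertex-future sequence, so the ``preemption never helps'' observation does not even formally apply.)

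What is missing is a mechanism that hides the loose partners. The paper uses the Epstein et al.\ instance, which has $L$ layers of $2n$ vertices. Each $v\in T_\ell$ in turn reveals edges to all remaining candidates in $T_{\ell+1}$. The adversary \emph{adaptively} demotes the currently lightest candidate to $L_{\ell+1}$, and at the end of the phase orders the survivors by decreasing weight. Unrolling $w(k)\le (A_k-W_{k-1})/(2n-k+1)$ shows that weight placed while processing the $i$-th vertex of $T_\ell$ reaches demoted vertices with coefficient $c_i\le 1/2$, where $\sum_i c_i\approx n(1-\ln 2)$. The decreasing-weight ordering plus Chebyshev's sum inequality turns this into the $\ln 2$, and chaining many layers gives $1/(1+\ln 2)$.

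Second, restricting to non-preemptive, symmetric strategies discards exactly what the lemma quantifies. The statement must hold for \emph{every} algorithm and every deficit $x$. Inside Lemma~\ref{lem:recursion} the algorithm is not optimizing its ratio on this sub-instance and may deliberately give up tight weight, so reductions valid only for ratio-optimal algorithms are unavailable. In the paper's accounting the deficit is created precisely by preemption: weight is removed from $T_{\ell-1}$--$T_\ell$ edges and re-placed on edges into $V_{\ell+1}$, in the hope that it lands on future loose vertices. The paper tracks this with $\Pre(T_\ell)$ and the pruning relaxation. Since re-placed weight reaches loose vertices at rate at most $\max_i c_i\le 1/2$, one gets
$n-\Res(T_{\ell+1})\ge \ln 2\,\Res(T_\ell)+\tfrac12\Pre(T_{\ell-1})-o(n)$.
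Summing over layers gives $\Alg\,(1+\ln 2)\le nL(1-x/2)+o(nL)$, which is where $f(x)=x/2$ comes from.

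Your dual-certificate paragraph asserts the factor $1/2$ (``net loss $\alpha\delta/2$'') without any mechanism that produces it, and it leaves dual feasibility unchecked. So the core inequality is never established. A smaller slip: $w_{T_1\cup T_2}+w_{L_1\cup L_2}$ is twice the matching value, so your ``equivalently'' is off by a factor of $2$, and your displayed target is really the bound on $w_{L_1\cup L_2}$ alone.
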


Our hard construction builds upon the tight-loose instance guaranteed by Lemma~\ref{lem:tightbaseinstance} through which we derive a recursive upper bound on the competitive ratios achievable by online preemptive fractional matching algorithms. In Section~\ref{sec:concludingmain}, we show how this recursion yields the bound stated in Theorem~\ref{thm:main}. The recursive argument itself is formalized and proved in Section~\ref{sec:recursion}.

\medskip

\begin{lemma}
\label{lem:recursion}
Assume that there exists a bipartite $\alpha$-hard online preemptive fractional matching instance for some $\alpha \le \tfrac{1}{1+\ln 2}$. Fix an arbitrary $\alpha_0 \in [1/2,\alpha]$, and define
\[
x_{\max} = 2\bigl(1 - \alpha_0(1+\ln 2)\bigr).
\]
Let
\[
\alpha' =
\min_{\lambda \in \mathbb{Z}_{>0}}
\max_{x \in [0,x_{\max}]}
\frac{
4 \cdot \frac{1-x/2}{1+\ln 2}
+ 2x
+ \lambda \alpha \left(1 - \frac{2 \cdot \frac{1-x/2}{1+\ln 2} - 1 + 2x}{\lambda + 1}\right)
}{
\lambda + 4
}
+ o(1).
\]
Then there exists a bipartite $\max(\alpha_0,\alpha')$-hard online preemptive fractional matching instance.
\end{lemma}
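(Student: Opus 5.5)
We build the new instance from the tight--loose instance of Lemma~\ref{lem:tightbaseinstance} (parameters $\beta = \tfrac{1}{1+\ln 2}$ and $f(x)=x/2$) on $T_1,T_2,L_1,L_2$ of size $n$ each.

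\textbf{Step 1 (base instance).} First we present the base instance, then freeze $T_2$ and $L_2$. Let $2n(1-x)$ be the weight the algorithm places on $T_1\cup T_2$.
\begin{itemize}
\item If $x > x_{\max}$, the algorithm is at most $\beta(1-x/2) < \beta(1-x_{\max}/2) = \alpha_0$ competitive at this moment. The instance already has a perfect matching, so we stop here, which is what yields the term $\max(\alpha_0,\cdot)$.
\item Otherwise $x\le x_{\max}$, and the total weight is at most $2n\cdot 2\beta(1-x/2)$. Writing it over vertices: $w_{T}\le 2n(1-x)$ and $w_{L}\le 2n(2\beta(1-x/2)-1+x)$. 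The numerator term $4\beta(1-x/2)+2x$ in the formula should come from combining these with later contributions.
\end{itemize}
To keep the graph bipartite we use two copies of the base instance, as mentioned in the overview. This gives the common normalising factor of $n$.

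\textbf{Step 2 (introduce $U$).} Add $U$ with $|U|=\lambda n$, completely joined to $T_1$, and then freeze $T_1$. Since $T_2$ is frozen, any weight on $T_1$--$U$ edges must be preempted from $T_1$--$L_1$ edges. Hence the weight on $L_1\cup U$ is at most $w_{L_1}+x n$-ish: it is bounded by the original $L_1$ weight plus the slack left on $T_1$. We then pick the $\lambda n$ heaviest vertices $S$ of $L_1\cup U$. By averaging, the weight they carry is at least a $\lambda/(\lambda+1)$ fraction of the total. This is where the factor $\frac{2\beta(1-x/2)-1+2x}{\lambda+1}$ enters: it is a lower bound on the average frozen weight $r$ per vertex of $S$.

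\textbf{Step 3 (recursive insertion).} Insert the assumed $\alpha$-hard instance on $S$, paired with the corresponding set of the mirror copy for bipartiteness. The resulting graph still has a perfect matching: $T_1$ matches into $U$, $T_2$ into $L_2$, the remaining leftover vertices match appropriately, and $S$ is matched by the inner perfect matching. The total vertex count is $(\lambda+4)n$ per side, which gives the denominator.

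\textbf{Main obstacle.} The key claim is that, with frozen average weight $r$ on $S$, the algorithm gains at most $\lambda n\,\alpha(1-r)$. The difficulty is that the frozen weights are non-uniform. Following Section~\ref{sec:meta}, we would take many copies of the whole construction and treat independent sets as meta-vertices, so that the inserted instance effectively sees uniform residual capacity $1-r$. A scaled $\alpha$-hard instance on capacity $1-r$ then gives gain at most $\alpha(1-r)|S|+o(n)$.

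\textbf{Conclusion.} Summing the contributions and dividing by OPT gives the stated expression. Minimizing over $\lambda$ (chosen by the adversary) and maximizing over $x\le x_{\max}$ (chosen by the algorithm) yields the bound. The delicate bookkeeping is verifying that the base weight bound plus the $x$-slack matches exactly the numerator $4\beta(1-x/2)+2x$.
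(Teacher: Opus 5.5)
Your outline follows the paper's construction: the tight--loose base instance, stopping when $x>x_{\max}$, a second copy for bipartiteness, $|U|=\lambda n$ joined to a tight side, the $\lambda n$ heaviest vertices of loose$\,\cup\, U$, $n$ copies of this medium stage, then Lemma~\ref{lem:meta}. The bookkeeping you defer, however, is exactly where your Step~2 goes wrong. You never separate the slacks of the two tight sides, and your estimate $w_{L_1}+xn$ silently assumes they are equal. Write the average weights of $T_1,T_2$ as $1-x_1,1-x_2$ with $x_1+x_2=2x$. Since you only know $w_{L_1\cup L_2}$, you need Claim~\ref{cl:loosenodes} to get $w_{L_1}=n(2y-1+x_2)$ and $w_{L_2}=n(2y-1+x_1)$, where $y=\frac{1-x/2}{1+\ln 2}$. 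Joining $U$ to $T_1$ then adds $nx_1$ to the output, not $nx$. If your copy also attaches $U'$ to $T_1'$, this stage gains $2nx_1$, which can reach $4nx$ (take $x_2=0$), and the numerator $4y+2x$ is lost. The paper instead attaches $U'$ to $T_2'$. Then the gains are $nx_1+nx_2=2nx$, both $L_1\cup U$ and $L_2'\cup U'$ carry exactly $n(2y-1+x_1+x_2)=n(2y-1+2x)$, and the two selected sets land on opposite sides of the bipartition. Attaching $U$ on the side with the smaller slack would also work, but some such choice has to be made.

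Second, you bound the weight on $L_1\cup U$ from \emph{above}. The averaging step gives a lower bound on $r$ only from a \emph{lower} bound on that total, so your inequality points the wrong way. The paper uses the exact value $n(2y-1+2x)$, i.e.\ it takes the algorithm to sit at the tight--loose bound and to fill $T_1$'s slack. To justify that, you need a trade-off argument: less weight in the base stage or on $U$ lowers $r$, but the final numerator is still increasing in these quantities. Writing $2nz$ ($z\le y$) for the algorithm's actual value on each base copy, the numerator has derivative $4-\frac{2\lambda\alpha}{\lambda+1}>0$ in $z$. Each unit placed on $U$ adds $1$ and removes at most $\frac{\lambda\alpha}{\lambda+1}<1$ from the recursive term. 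So plugging in the upper bounds is legitimate. Finally, the perfect matching is not ``$T_1$ into $U$'', because $S$ may contain vertices of $U$ and omit vertices of $L_1$. $T_1$ must be matched onto $(L_1\cup U)\setminus S$, using the $T_1$--$L_1$ perfect matching on $L_1\setminus S$ and the complete graph to $U$ for the remaining vertices of $T_1$.
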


\paragraph{Arguing about Preemption}
\label{sec:frozen}

Throughout this section, for analytical convenience, we make several assumptions about the behavior of online preemptive fractional matching algorithms. These assumptions do not restrict generality and are introduced solely to simplify the competitive-ratio analysis.

Our first assumption is that the adversary may \emph{freeze} vertices. By freezing a vertex, the adversary guarantees that no further edges incident to that vertex will be revealed. Consequently, we may assume without loss of generality that the algorithm never preempts weight from edges incident to frozen vertices. Since frozen vertices cannot receive additional weight in the future, preempting edges incident to them cannot improve the algorithm’s competitive ratio. Formally, for any algorithm that preempts weight from frozen vertices, there exists an alternative algorithm that does not do so and achieves no worse competitive performance.

We further assume that whenever an edge is inserted, the algorithm assigns weight so as to make at least one of its endpoints tight. The algorithm has no incentive not to follow this strategy, as any weight assigned in this manner can be preempted later if needed. Moreover, after the insertion of an edge, we assume that the algorithm only preempts weight from edges incident to the tight endpoints of that edge (i.e., from vertices whose loads may increase as a result of the insertion), and not from elsewhere in the graph. This assumption is without loss of generality: preempting weight from unrelated edges cannot improve the algorithm’s competitive ratio at the current time, and any such preemption can always be deferred to a later point.

Finally, we assume that whenever the algorithm is faced with a choice between assigning total weight $x$ or $x+\Delta$ to a vertex set $U$, and this choice does not affect the weights of any other vertices, the algorithm always chooses the latter option. Although this assumption may appear odd and self-evident, some of our arguments rely on the algorithm placing excessive weight on certain vertices, and one might wonder whether the algorithm could avoid such situations by choosing the smaller assignment.

To justify this assumption, assume that in such cases the adversary may, for analytical purposes, introduce a collection of dummy vertices connected to $U$, allowing the algorithm to increase the total weight on $U$ to $x+\Delta$, while guaranteeing that the optimal offline solution does not use any of these edges. Since the algorithm may later preempt these edges at no cost, it has no rational reason not to exploit this additional flexibility. 

\subsection{Analyzing the Tight-Loose Structure of \cite{EpsteinLSW18Weaker}}
\label{sec:tightstructure}

In this section, we formally show that any online fractional algorithm observing the hard instance of Epstein et al.~\cite{EpsteinLSW18Weaker} must incur a loss in competitive ratio if it employs preemption in a way that deviates from the tight--loose structure. This establishes Lemma~\ref{lem:tightbaseinstance}. Our proof can be viewed as an extension of the original hardness argument of Epstein et al. to the setting of fractional algorithms where the algorithm does employ preemption.

\paragraph{The construction of Epstein et al. \cite{EpsteinLSW18Weaker}.}

The instance consists of $L$ layers $V_\ell : \ell \in [L]$ of vertices, where each layer contains $2n$ vertices. Through the arrivals the vertices of each layer $V_\ell$ will be partitioned into sets loose $L_\ell$ and ordered set tight $T_\ell$, with $|L_\ell| = |T_\ell| = n$ once the sequence finished. This partitioning and ordering will depend on the choices of the algorithm for all layers except layer $1$. For layer 1 half of its vertices fall in each of $T_1,L_1$ and $T_1$ is ordered arbitrarily. We will describe this partitioning dynamically as edges arrive. In the final input graph all edges will run between vertex sets $T_\ell$ and $V_{\ell +1}$ for $\ell \in [L-1]$. Note that there will be no edges between $L_\ell$ and $V_{\ell +1}$ for $\ell \in \{2 \dots L-1\}$.

Edges arrive in phases where in phase $\ell$ the edges between $T_\ell$ and $V_{\ell+1}$ arrive. At the start of phase $\ell$ the adversary sets $T_{\ell+1} = V_{\ell+1}$ and $L_{\ell + 1} = \emptyset$. In each update the adversary chooses the next vertex $v$ in $T_\ell$ according to the ordering and inserts edges from $v$ to all vertices of $T_{\ell +1}$. Then it moves the lowest weight vertex of $T_{\ell+1}$ to $L_{\ell+1}$. At the end of the phase the adversary decreasingly orders the remaining $n$ vertices of $T_{\ell+1}$ according to their current weight.

The final graph consists of the following:
\begin{itemize}
    \item Complete sub-graphs between vertices of $T_{\ell}$ and $T_{\ell + 1}$ for all $\ell \in [L-1]$
    \item An upper triangular graph between the vertices of $T_\ell$ and $L_{\ell+1}$ for $\ell \in [L-1]$
\end{itemize}

\paragraph{Dealing with Preemption.}

While reasoning about preemption is already challenging when an algorithm’s sole objective is to maximize its competitive ratio, our setting requires a more delicate analysis. In particular, we must establish a trade-off between the maximum competitive ratio an algorithm can achieve and the degree to which it allows the tight vertices in the construction of Epstein et al.~\cite{EpsteinLSW18Weaker} to deviate from being fully tight. Ideally, an algorithm would like to place weight on edges between loose vertices; however, by construction, no such edges exist.

In this setting, the algorithm’s best available strategy is to place weight on edges connecting tight vertices to loose ones. Whenever the edges incident to a vertex $v \in T_\ell$ and the vertices of $V_{\ell+1}$ are revealed, the algorithm has an opportunity to do so, as it knows that some of the neighbors of $v$ in layer $\ell+1$ will eventually be classified as loose. We may safely assume that the algorithm assigns as much weight as possible to these revealed edges: any weight temporarily assigned to edges between tight vertices can always be preempted later if needed.

Since at all times the set $T_{\ell+1}$ contains at least $n$ vertices, the algorithm can always make $v$ tight. Thus, without loss of generality, we assume that immediately after the edges from $v$ to $V_{\ell+1}$ are revealed, the algorithm makes $v$ tight.

Observe that during phase~$\ell$, edges arrive only between layers $\ell$ and $\ell+1$, and no edges between layers $\ell+1$ and $\ell+2$ are yet present. Therefore, at this point, the algorithm could only consider preempting weight from edges between layers $\ell-1$ and $\ell$, or from those between $\ell$ and $\ell+1$. For analytical convenience, we further wish to assume that during phase~$\ell$ the algorithm only preempts weight from edges between layers $\ell-1$ and $\ell$, and not from edges between layers $\ell$ and $\ell+1$.

To justify this assumption, we allow the algorithm a temporary relaxation: during phase~$\ell$, it may increase the total weight assigned to vertices in layer $\ell+1$ beyond~$1$. However, when phase~$\ell+1$ begins and the adversary reveals the edges incident to some vertex $v \in T_{\ell+1}$ and the vertices of $V_{\ell+2}$, the algorithm is required to preempt weight from edges incident to $v$ until its total incident weight is at most~$1$. We refer to this forced preemption step, when it occurs, as the \emph{pruning} of a vertex in layer~$\ell+1$.

Under this relaxation, the algorithm is never incentivized to remove weight from edges between layers $\ell$ and $\ell+1$ during phase~$\ell$. Moreover, since in phase~$\ell$ edges are introduced only incident to vertices in $T_\ell$, the algorithm may only benefit from preempting weight from edges between $T_{\ell-1}$ and $T_\ell$, as there are no edges between $L_{\ell-1}$ and $T_\ell$.

\paragraph{Competitive Analysis.} Call the \emph{residual weight} of a vertex at some point in time to be $1$ minus its current weight, that is how much more load it can take. Further define the following notation:

\begin{itemize}
    \item Let $\Res(T_\ell)$ stand for the total residual weight of vertices with weight at most $1$ in $T_\ell$ at the start of phase $\ell$ before any pruning occurs,
    \item Let $\Pre(T_\ell)$ stand for the total weight preempted from vertices of $T_\ell$ during phase $\ell+1$ (including pruning if it occurs).
\end{itemize}

\medskip

\begin{claim}
\label{cl:tight:eq1}

$n - \Res(T_{\ell+1}) \geq \ln 2\Res(T_\ell) + \Pre(T_{\ell-1})/2 - o(n)$ for all $\ell \in [L-1]$.

\end{claim}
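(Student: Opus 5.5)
The plan is to track, step by step during phase $\ell$, how much weight the algorithm pushes from $T_\ell$ into layer $\ell+1$, and how much of it ends up on the $n$ vertices that survive as $T_{\ell+1}$. By the temporary relaxation, layer-$(\ell+1)$ vertices may exceed weight $1$ during phase $\ell$. The quantity $n-\Res(T_{\ell+1})$ is at least $\sum_{u\in T_{\ell+1}}\min(w_u,1)$, so it suffices to lower bound the weight landing on survivors and to control the capping separately. I would:
\begin{itemize}
\item argue that capping either costs only $o(n)$, or can be absorbed because a vertex above $1$ already contributes its full $1$;
\item compare against the uncapped sum, which is what the water-filling computation below bounds.
\end{itemize}

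First I fix what the $i$-th arriving vertex $v_i\in T_\ell$ injects. At that moment $T_{\ell+1}$ has $2n-i+1$ vertices. By the normalization assumptions, $v_i$ is first pruned if its weight exceeds $1$, and is then made tight. The weight it adds to edges into $T_{\ell+1}$ is its residual $r_i$ (zero if it was over-full) plus the weight $p_i$ it voluntarily preempts from edges to $T_{\ell-1}$. The latter must be new capacity freed on $v_i$, since by the relaxation nothing is taken from layer-$(\ell+1)$ edges. Summing gives $\sum_i r_i=\Res(T_\ell)$. The total preempted from $T_{\ell-1}$ in phase $\ell$, including pruning, is $\Pre(T_{\ell-1})$.

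I then split $\Pre(T_{\ell-1})$ into a pruning part and a voluntary part $\sum_i p_i$. I would argue that the voluntary part is at least half of $\Pre(T_{\ell-1})$ up to $o(n)$. Failing that, I would show that pruned weight is weight that was counted as over-capacity on layer $\ell$ and is therefore harmless to charge with coefficient $1/2$. I am least sure how the paper intends this split, so this step may need adjustment.

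Next comes the water-filling bound. The adversary removes a minimum-weight vertex after each step, so the survivors always hold at least a $\frac{|\text{survivors}|}{|T_{\ell+1}|}$ fraction of the current total weight on $T_{\ell+1}$. Weight injected at step $i$, however it is distributed, therefore has at least a $\frac{n}{2n-i+1}\ge\frac12$ fraction ending up on final survivors, up to lower-order terms. This immediately gives the $\frac12\sum_i p_i$ term. For the residual term I would use that $T_\ell$ is processed in decreasing order of weight, i.e.\ increasing residual. Among orderings of a fixed multiset of residuals, the one with small residuals first and large ones late minimizes $\sum_i r_i\cdot\frac{n}{2n-i+1}$ only in an unfavorable direction. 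The standard Epstein et al.\ computation, comparing to the uniform case via a rearrangement/Chebyshev-sum argument, then yields at least $\Res(T_\ell)\cdot\frac1n\sum_{i=1}^{n}\frac{n}{2n-i+1}\to\ln 2\cdot\Res(T_\ell)$.

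The main obstacle is making the water-filling step rigorous when the algorithm distributes weight non-uniformly and the removed vertex is adversarially the lightest. I would try the potential argument first. Let $S_i$ be the total weight on the current $T_{\ell+1}$ after step $i$. Removing the minimum-weight vertex leaves at least $S_i\cdot\frac{2n-i}{2n-i+1}$. Telescoping this product from step $i$ to step $n$ gives exactly the survival factor $\frac{n}{2n-i+1}$ for weight injected at step $i$. Summing with the monotone residuals then gives the $\ln2$ bound, with discretization error $o(n)$.
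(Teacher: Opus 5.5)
Your core computation is the paper's own argument. The potential bound with survival factor $\frac{n}{2n-i+1}$ is the paper's recursion for the weight $W(n)$ moved into $L_{\ell+1}$, read from the survivors' side: the survival factor is $1-c_i$, where correctly $c_i=\frac{n-i+1}{2n-i+1}$. Your Chebyshev step is the paper's Abel-summation step. Your rearrangement sentence has the direction muddled. Residuals and survival factors are both increasing in $i$, so they are similarly ordered, which is the favourable case, and Chebyshev gives $\sum_i r_i\frac{n}{2n-i+1}\ge\frac1n\sum_i r_i\sum_i\frac{n}{2n-i+1}$, which is what you need. What the argument actually proves is
\[
\sum_{u\in T_{\ell+1}} w_u\;\ge\;\ln 2\,\Res(T_\ell)+\tfrac12\sum_i p_i-O(1),
\]
with the \emph{uncapped} survivor weight on the left and only \emph{voluntary} preemption on the right.

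The genuine gap is in the two bookkeeping steps you flagged. Neither proposed fix can work, because under the relaxation the inequality as literally stated is false.

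\emph{Capping.} In phase $1$, every vertex of $T_1$ has residual $1$. Let $v_{2j-1}$ and $v_{2j}$ both put their full unit on the same fresh vertex $u_j\in V_2$. The adversary always removes a weight-$0$ vertex, so $T_2$ ends with $n/2$ vertices of weight $2$ and $n/2$ of weight $0$. Then $n-\Res(T_2)=n/2<\ln 2\cdot n=\ln 2\,\Res(T_1)$. So the capping loss is $\Theta(n)$, not $o(n)$, and an over-full vertex contributing only $1$ is exactly the problem rather than a remedy.

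\emph{Pruning.} Continue into phase $2$ with no voluntary preemption. The heavy half is pruned, so $\Pre(T_1)=n/2$ while $\sum_i p_i=0$. Only $\Res(T_2)=n/2$ is injected, giving $n-\Res(T_3)\le n/2<\bigl(\tfrac{\ln 2}{2}+\tfrac14\bigr)n$. Pruning frees no capacity on the pruned vertex, so it cannot be charged at coefficient $\tfrac12$. The voluntary part also need not be any fixed fraction of $\Pre(T_{\ell-1})$. Even if it were at least half, survival $\ge\tfrac12$ would only yield $\tfrac14\Pre(T_{\ell-1})$.

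The paper's proof does not resolve this either. It simply sets $\sum_i\Pre(T_\ell,i)=\Pre(T_{\ell-1})$ and identifies the survivors' weight with $n-\Res(T_{\ell+1})$, which tacitly assumes no layer-$(\ell+1)$ vertex ever exceeds weight $1$.

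\emph{Repair.} Keep pruning explicit. With $\pi_\ell=\sum_{u\in T_{\ell+1}}(w_u-1)^+$, your bound reads $n-\Res(T_{\ell+1})+\pi_\ell\ge\ln 2\,\Res(T_\ell)+\tfrac12\sum_i p_i-O(1)$. The final output weight is at most $\sum_\ell\Res(T_\ell)-\sum_\ell\pi_\ell$. Summing over $\ell$, the pruning terms therefore enter with coefficient $\ln 2\ge\tfrac12$ on the favourable side. So Lemma~\ref{lem:tightbaseinstance} still follows, even though the claim in its stated per-layer form does not hold for all algorithms.
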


As the proof of Claim~\ref{cl:tight:eq1} is rather numerically heavy and it would disrupt the flow of presentation we defer it to Appendix~\ref{app:tightnodes}. 

We now show how can it be used to conclude Lemma~\ref{lem:tightbaseinstance}. Consider how much can the algorithm increase the weight of its output when the edges of any vertex $v \in T_\ell$ to $V_{\ell+1}$ are revealed. If $v$-s residual weight is $r$ it can place $r$ weight on its recently arrived edges. In addition it may preempt weight from some of its edges and place that weight somewhere else. However, this does not increase the total weight of its output. Hence, the total weight of the algorithms output is $\sum_{\ell \in [L]}\Res(T_\ell)$, refer to this value $\Alg$.

Let $x$ stand for $\sum_{\ell \in [L]}\Pre(T_{\ell})/(n \cdot L)$. After the algorithm preempts edge weights in phase $\ell$ from edges between layers $\ell-1$ and $\ell$ the $\ell-1$ can never recover that weight as it will not observe further edges. Hence, $x$ will stand for the average residual weight of tight $(\cup T_\ell)$ vertices at the end of the insertion sequence. By summing up the left hand side of Claim~\ref{cl:tight:eq1} over all layers we get:

\begin{align*}
 \sum_{\ell \in [L-1]} n - \Res(T_{\ell+1}) & =  nL + \Alg - \Res(T_1) \nonumber \\
& \leq  (n+1)L + \Alg \nonumber\\
\end{align*}

Here we used that $\Res(T_\ell) \leq n$ for any layer $\ell$. By summing up the right hand side we get:

\begin{align*}
\sum_{\ell \in [L-1]} \ln 2\Res(T_\ell) + \Pre(T_{\ell-1})/2 - o(n) & =  \ln 2 \Alg + nLx/2 - \ln 2 \Res(T_L) - \Pre(T_L) - o(nL) \nonumber \\
& \geq \ln2 \Alg + nLx/2 - 3n - o(nL) \nonumber
\end{align*}

Here we used that $\Pre(T_\ell) \leq n$ for any layer $\ell$ as the algorithm may only remove at most $n$ weight of the $n$ vertices of layer $\ell$. By evoking the claim and reordering we get that $\Alg \cdot (1+\ln(2)) \leq (n+1)L +3n - nLx/2 - o(nL)$. The graph contains a matching of size $(L-1)n$ Hence, the algorithm has competitive ratio 

$$\frac{(n+1)L + 2n -nLx/2 - o(nL)}{n(L-1) \cdot (1+\ln(2))} = \frac{1-x/2}{1+\ln 2} + o(1)$$

as we can select $L$ to be an arbitrarily large value.

\subsection{Proving the Recursive Formula}

This section is devoted to proving our recursive formula from Lemma~\ref{lem:recursion}. Recall that the lemma assumes there exists some $\alpha$-hard bipartite instance for online preemptive fractional matching. We further know from Lemma~\ref{lem:tightbaseinstance} that there exists a $(1/(1+\ln 2),x/2)$-tight-loose instance, which will be the starting point of the adversary.

After inserting the tight-loose structure let the resulting graph and its tight and loose vertex sets be denoted by $G = (T_1\cup T_2 \cup L_1 \cup L_2,E)$ where all four vertex sets have the same size $n$. Recall that by the definition of a tight-loose structure edges may only run between $T_1-T_2$, $T_1-L_1$ and $T_2-L_2$. Assume that the algorithm assigns an average weight of $1-x$ to vertices of $T_1 \cup T_2$.

At this point there is a perfect matching present in the instance and by the lemma we know that the algorithms competitive ratio can be at most $(1-x/2)/(1+\ln 2) + o(1)$, denote this value by $y$. This implies an upper bound on the average weight of loose vertices of $L_1 \cup L_2$ of $2y -1 +x$. It could be the case that the average weight of $T_1$ and $T_2$ vertices are not the same. Let these average weight values for the two sets be denoted by $1-x_1$ and $1-x_2$ respectively, where $x_1 + x_2 = 2x$.

\medskip

\begin{claim}
\label{cl:loosenodes}

The average weight of vertices in $L_1$ and $L_2$ are $2y-1+x_2$ and $2y-1+x_1$ respectively.

\end{claim}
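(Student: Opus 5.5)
The plan is a pure edge-counting argument. It uses only the edge structure of Definition~\ref{def:tight-loose} and the aggregate bound on loose vertices supplied by Lemma~\ref{lem:tightbaseinstance}. The claim should be read as an upper bound up to $o(1)$ terms. It holds with equality when the algorithm attains the bound of the lemma, which we may assume by the convention from the paragraph on arguing about preemption that the algorithm always takes the larger available weight on a vertex set. First I decompose vertex weights by edge classes. Write $w(A,B)$ for the total weight on edges between vertex sets $A$ and $B$. Since $E = E[T_1\times T_2]\cup E[T_1\times L_1]\cup E[T_2\times L_2]$, every edge at a vertex of $L_1$ goes to $T_1$, and every edge at a vertex of $L_2$ goes to $T_2$. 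Hence
\[
w_{L_1} = w(T_1,L_1) = w_{T_1} - w(T_1,T_2), \qquad w_{L_2} = w(T_2,L_2) = w_{T_2} - w(T_1,T_2).
\]
Substituting $w_{T_1}=n(1-x_1)$ and $w_{T_2}=n(1-x_2)$ expresses both loose averages through the single unknown $w(T_1,T_2)$.

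Next I lower-bound $w(T_1,T_2)$ using the aggregate bound. By Lemma~\ref{lem:tightbaseinstance} and Definition~\ref{def:alphatight} with $f(x)=x/2$, we have $w_{L_1\cup L_2} \le 2n(2y-1+x+o(1))$. Summing the two identities gives
\[
w_{L_1\cup L_2} = n(2-x_1-x_2) - 2w(T_1,T_2) = n(2-2x) - 2w(T_1,T_2).
\]
This yields $w(T_1,T_2) \ge n(2-2x-2y) - o(n)$. In words, the less weight the loose side can carry, the more weight is forced onto the tight–tight edges, and this mass is shared symmetrically by $T_1$ and $T_2$.

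Finally I plug this back into each identity. I get $w_{L_1} \le n(1-x_1) - n(2-2x-2y) + o(n) = n(2y-1+2x-x_1)+o(n)$. Using $x_1+x_2=2x$, this equals $n(2y-1+x_2)+o(n)$. Symmetrically, $w_{L_2}\le n(2y-1+x_1)+o(n)$. Dividing by $n$ gives the stated averages. The "crossing" of indices is explained by the shared tight–tight weight: whatever slack $T_1$ has is lost to $L_1$, not to $L_2$.

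The only delicate point is interpreting the claim as an exact equality rather than an upper bound. For the recursion we only need the upper bounds. If equality is needed, I would argue that it holds when the total loose weight meets the lemma's bound, which is the worst case for the adversary. I do not expect any real obstacle beyond bookkeeping of the $o(1)$ terms.
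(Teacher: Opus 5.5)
Your computation is the paper's. Both solve one linear system for $w(T_1,T_2)$, $w(T_1,L_1)$, $w(T_2,L_2)$, using $w_{T_1}=n(1-x_1)$, $w_{T_2}=n(1-x_2)$ and one aggregate quantity. The paper uses the total edge weight $w(T_1,T_2)+w(T_1,L_1)+w(T_2,L_2)=2ny$, and you use the loose-side bound $w_{L_1\cup L_2}\le 2n(2y-1+x)+o(n)$. These are interchangeable, because the total edge weight equals $\tfrac12(w_{T_1\cup T_2}+w_{L_1\cup L_2})$ and $w_{T_1\cup T_2}=2n(1-x)$. Your algebra is correct.

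What needs fixing is your treatment of equality versus upper bound.

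\begin{itemize}
\item The convention that the algorithm prefers $x+\Delta$ to $x$ on a vertex set does not force it to attain the bound of Lemma~\ref{lem:tightbaseinstance}. An algorithm may simply do worse.
\item It is not true that the recursion only needs the upper bounds. Claim~\ref{cl:Uweight} adds $nx_1$ to $w(T_1,L_1)$ to obtain a \emph{lower} bound on the average weight of $U_{L_1}$. It therefore needs $w(T_1,L_1)\ge n(2y-1+x_2)$, i.e.\ an upper bound on $w(T_1,T_2)$. Your argument only gives a lower bound on $w(T_1,T_2)$.
\end{itemize}

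The fix is to let $y$ denote the algorithm's realized ratio at that moment; this is also the reading under which the paper's ``$=2ny$'' is literally true. Your identities then give the claimed averages exactly, with no assumption. Lemma~\ref{lem:tightbaseinstance} enters only as $y\le\frac{1-x/2}{1+\ln 2}+o(1)$. This suffices downstream: the final bound $4y+2x+\lambda\alpha\bigl(1-\frac{2y-1+2x}{\lambda+1}\bigr)$ has derivative $4-\frac{2\lambda\alpha}{\lambda+1}>0$ in $y$, so the upper bound on $y$ can be substituted at the very end.
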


\begin{proof}

Let $w_{T_1,T_2}$, $w_{T_1,L_1}$ and $w_{T_2,L_2}$ stand for the total weight of edges running between vertex sets $T_1,T_2$, $T_1,L_1$ and $T_2,L_2$ respectively. As all edges run between these bipartition, we have that $w_{T_1,T_2}+w_{T_1,L_1}+w_{T_2,L_2} = 2ny$. From the definition of $x_1$ and $x_2$ we also know that $w_{T_1,T_2}+w_{T_1,L_1} = n(1-x_1)$ and $w_{T_1,T_2}+w_{T_2,L_2} = n(1-x_2)$.

Combining the equalities we get that $w_{T_1,L_1} = n(2y-1+x_2)$ and $w_{T_2,L_2} = n(2y-1+x_1)$. As all edges of loose vertices run to their corresponding tight sets we can conclude the claim.

\end{proof}

At this point, the adversary creates a copy of the instance together with the algorithm’s current output. Whether such copying is permissible varies across the literature. In Section~\ref{sec:copy}, we explain why, from a hardness perspective, it is without loss of generality to assume that the adversary may duplicate instances—including the algorithm’s output—in the context of the online preemptive fractional matching problem.

Let the copied instance be denoted by $G(T_1' \cup T_2' \cup L_1' \cup L_2', E')$. The adversary then inserts vertex sets $U$ and $U'$ of size $\lambda n$ for some integer constant $\lambda$. At this point the algorithm freezes the vertices of $T_2,L_2, T_1',L_1'$. Afterwards the adversary inserts complete bipartite sub-graphs between vertices of $T_1-U$ and $T_2'-U'$. The algorithm will react to this insertions. After this reaction let the vertex sets $U_{L_1}$ and $U_{L_2'}'$ refer to the $\lambda n$ highest weight vertices of $L_1 \cup U$ and $L_2' \cup U'$ respectively.

\medskip

\begin{claim}
\label{cl:Uweight}

The average weights of vertices in both $U_{L_1}$ and $U_{L_2'}'$ are at least $(2y-1+2x)/(\lambda+1)$.

\end{claim}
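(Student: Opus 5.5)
The plan is to lower bound the \emph{total} weight on $L_1 \cup U$ (and on $L_2' \cup U'$) after the algorithm reacts to the new complete bipartite graphs. The claim then follows by averaging: the $\lambda n$ heaviest vertices of a set of $(\lambda+1)n$ vertices have average weight at least the average weight of the whole set. It therefore suffices to show
\[
w_{L_1 \cup U} \;\ge\; n(2y-1+2x) \qquad\text{and}\qquad w_{L_2' \cup U'} \;\ge\; n(2y-1+2x).
\]
Dividing by $(\lambda+1)n$ then gives the claimed bound for $U_{L_1}$ and $U'_{L_2'}$.

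\textbf{Step 1: what the algorithm does on $T_1$.} The only new edges are between $T_1$ and $U$ (and between $T_2'$ and $U'$). The neighbours of $T_1$ are $T_2$, $L_1$ and $U$, and $T_2$ is frozen. By the assumptions of Section~\ref{sec:frozen}, the algorithm never preempts weight from edges incident to frozen vertices, so $w_{T_1,T_2}$ is unchanged. The only preemption it can perform on edges incident to $T_1$ is therefore from $T_1$--$L_1$ edges, and such weight can only be re-placed on $T_1$--$U$ edges. This moves weight within $L_1 \cup U$ and does not decrease $w_{L_1\cup U}$. The $U$ vertices are fresh, with weight $0$, so placing weight on $T_1$--$U$ edges affects only the weight of $T_1$ and of $U$. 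By Claim~\ref{cl:loosenodes}, the residual weight of $T_1$ is $n x_1$. By the ``larger weight'' assumption (a choice between $x$ and $x+\Delta$ on a set, not affecting other vertices, resolved in favour of $x+\Delta$), together with the assumption that insertions make an endpoint tight, the algorithm fills this residual using $U$. Note that $|U| = \lambda n \ge n$, so capacity on the $U$ side is never a constraint.

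\textbf{Step 2: summing.} Before the insertion, $w_{L_1} = n(2y-1+x_2)$ by Claim~\ref{cl:loosenodes}. After the reaction, $w_{L_1\cup U}$ equals the total weight on $T_1$--$(L_1\cup U)$ edges. This is $n(2y-1+x_2)$ from the redistributed $T_1$--$L_1$ weight plus $n x_1$ from the filled residual:
\[
n(2y-1+x_2)+nx_1 \;=\; n(2y-1+2x), \qquad\text{using } x_1+x_2=2x.
\]
The copy is symmetric. Its output is identical, so $T_2'$ has residual $nx_2$ and $w_{L_2'} = n(2y-1+x_1)$, while $T_1'$ is frozen. The same argument gives $w_{L_2'\cup U'} \ge n(2y-1+x_1+x_2) = n(2y-1+2x)$.

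\textbf{Main obstacle.} The delicate step is Step~1: justifying rigorously that the algorithm neither loses weight on $T_1$--$L_1$ edges without compensating on $U$, nor leaves $T_1$'s residual unused. I would handle this by appealing directly to the preemption assumptions of Section~\ref{sec:frozen}. Preemption happens only at tight endpoints of inserted edges, and never at frozen vertices. Any weight removed from a $T_1$--$L_1$ edge frees capacity at a vertex of $T_1$, which by the tightness assumption is immediately re-filled via $U$. If needed, this is enforced through the dummy-vertex argument that accompanies the ``larger weight'' assumption.
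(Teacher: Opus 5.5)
Your proposal is correct and follows essentially the same route as the paper. Both arguments use that $T_2$ is frozen and that $T_1$'s residual is filled via $U$, so that with Claim~\ref{cl:loosenodes} one gets $w_{L_1\cup U}=n(2y-1+x_2)+nx_1=n(2y-1+2x)$, and then average over the $\lambda n$ heaviest of the $(\lambda+1)n$ vertices (your check of the swapped roles in the copy is a welcome addition, though the residual $nx_1$ comes from the definition of $x_1$, not from Claim~\ref{cl:loosenodes}).
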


\begin{proof}

Without loss of generality focus on the average weight of $U_{L_1}$. Consider how the algorithm may have reacted to the insertion of the complete bipartite sub-graph between $U$ and $T_1$. As $T_2$ is frozen at this point, the algorithm doesn't want to preempt weight from $T_2-T_1$ edges. The algorithm also wants to make at least one endpoint of all the edges in this complete bipartite graph tight to maximize its competitive ratio. The average weight of $U_{L_1}$ is minimized if the algorithm decides to make the $T_1$ endpoints tight.

Before the insertion the average weight of $T_1$ vertices is $1-x_1$ so the algorithm can safely place a total weight of $nx_1$ on the newly arrived edges to $U$. Afterwards, it might choose to preempt some weight from $T_1-L_1$ edges to distribute them into $T_1-U$ edges. Either way, the total weight of edges between $T_1$ and $L_1 \cup U$ will be $n(2y-1+x_1 + x_2) = n(2y-1+2x)$ by Claim~\ref{cl:loosenodes}. As $U_{L_1}$ consists of the $\lambda n$ highest weight vertices of $U \cup L_1$ they will be hosting at least a $\lambda/(\lambda+1)$ fraction of this weight implying the claim.

\end{proof}

Now the adversary freezes vertices $T_1, T_2'$ and creates further $n-1$ copies of the whole instance so far (lets refer to this as the \emph{medium stage}), in each including the output of the algorithm. Let $(U_{L_1})^1,\dots,(U_{L_1})^n$ and $(U_{L_2'}')^1, \dots ,(U_{L_2'}')^n$ correspond to the copies of $U_{L_1}$ and $U_{L_2'}'$ in these instances. Finally, the adversary inserts the initial $\alpha$-hard instance between the vertices of $\bigcup (U_{L_1})^i$ and $\bigcup (U_{L_2'}')^i$.

\begin{figure}[h]
    \centering
    \includegraphics[width=0.9\textwidth]{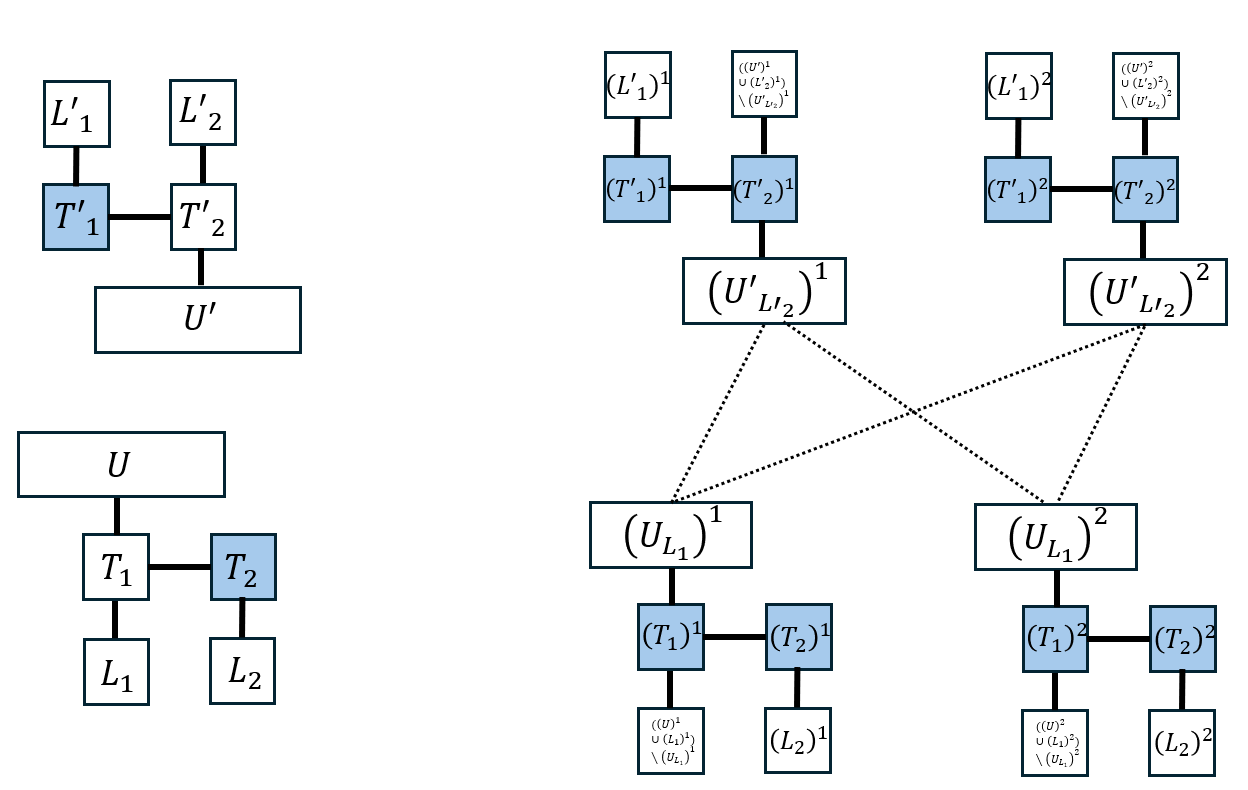}
    \caption{Left: Medium stage, Right: Final construction}
    \label{fig2}
\end{figure}

The detail of this process are formally explained in Section~\ref{sec:meta} and the guarantees of its outcome are summarized by Lemma~\ref{lem:meta}. Essentially, the algorithm treats vertex sets $(U_{L_1})^i,(U_{L_2'}')^i$ as meta vertices in a bipartite meta graph $\mathcal{G}$ and inserts the $\alpha$-hard instance into $\mathcal{G}$. The insertion of an edge between a pair of meta vertices is simulated through an insertion of a complete bipartite graph between the corresponding vertex sets. Now we turn to analyzing the competitive ratio.

\medskip

\begin{claim}

At the end of the edge arrival sequence the weight of an optimal offline solution is $n^2(4+\lambda)$.

\end{claim}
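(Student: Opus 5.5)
The plan is to count the vertices of the final graph and then exhibit a perfect matching. Since a matching can never exceed half the number of vertices, and since by bipartiteness fractional and integral optima coincide, this will pin the optimum at exactly half the vertex count.

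\textbf{Counting.} Each instance of the medium stage consists of two copies of the tight--loose structure, $G$ and $G'$, each with $4n$ vertices, together with $U$ and $U'$, each of size $\lambda n$. That gives $(8+2\lambda)n$ vertices per copy. There are $n$ copies, and the final insertion of the $\alpha$-hard instance adds no new vertices, because it only adds edges among the sets $(U_{L_1})^i$ and $(U_{L_2'}')^i$. So the graph has $n^2(8+2\lambda)$ vertices, and $\mathrm{OPT}\le n^2(4+\lambda)$.

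\textbf{Perfect matching.} I construct it inside each copy $i$ and then glue the copies together with the $\alpha$-hard instance.
\begin{itemize}
\item Match $T_2$ to $L_2$ and $T_1'$ to $L_1'$ via the perfect matchings guaranteed by Definition~\ref{def:tight-loose}.
\item Let $C=(L_1\cup U)\setminus U_{L_1}$. It has exactly $n$ vertices; write $C=A\cup B$ with $A\subseteq L_1$ and $B\subseteq U$. Match each vertex of $A$ to its partner in the perfect matching between $T_1$ and $L_1$; these partners are distinct.
\item The remaining $n-|A|=|B|$ vertices of $T_1$ are matched arbitrarily to $B$, which is possible because $T_1$--$U$ is complete bipartite.
\item Symmetrically, match $T_2'$ to $(L_2'\cup U')\setminus U_{L_2'}'$.
\end{itemize}
After this, every vertex in copy $i$ is matched except those of $(U_{L_1})^i$ and $(U_{L_2'}')^i$.

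\textbf{Gluing.} By Definition~\ref{def:alpha-hard}, the $\alpha$-hard instance inserted on the meta graph $\mathcal{G}$ admits a perfect matching between the meta vertices $(U_{L_1})^i$ on one side and $(U_{L_2'}')^j$ on the other. Each meta edge used in that matching is realized as a complete bipartite graph between two sets of size $\lambda n$, so it contains a perfect matching of those sets. Taking the union of these matchings with the ones above covers every vertex, giving a matching of size $n^2(4+\lambda)$.

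\textbf{Main obstacle.} The delicate step is the gluing. It needs the meta-vertex simulation of Section~\ref{sec:meta} (Lemma~\ref{lem:meta}) to use exactly $n$ meta vertices per side, all of equal size $\lambda n$, with meta edges replaced by complete bipartite blocks. If the simulation instead uses a blow-up with more copies, the same argument still applies, but the count would have to be rescaled accordingly.
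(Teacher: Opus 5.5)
Your proposal is correct and matches the paper's proof essentially step for step. It uses the same vertex count of $2(4+\lambda)n^2$ and the same perfect matching: the tight--loose matchings $T_2$--$L_2$ and $T_1'$--$L_1'$, the restriction of the $T_1$--$L_1$ perfect matching completed through the complete bipartite graph $T_1$--$U$ (and symmetrically for $T_2'$), and the glue via the perfect matching of the $\alpha$-hard meta instance with complete bipartite blocks. Your caveat that the meta instance must have exactly $n$ meta vertices per side is a point the paper also glosses over, and it does not affect whether the perfect matching exists.
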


\begin{proof}

The resulting graph has $2(4+\lambda)n^2$ vertices. The initial instance was on $4n$ vertices which were first copied once. The addition of vertex sets $U,U'$ increased the vertex site size to a further $2(4+\lambda)n$ and the medium stage instance was copied $n-1$ times. So its sufficient to argue that there exist a perfect matching in the graph at the end. Note that in each copy of the medium stage instance vertex sets $T_2$ and $T_1'$ can be perfectly matched to vertices of $L_2$ and $L_1'$ respectively due to the definition of the initial tight-loose structure. In the meta graph from the definition of an $\alpha$-hard instance there must exist a perfect matching of the meta vertices. As meta edges represent a complete bipartite sub-graph between the corresponding vertex sets vertices of $\cup (U_{L_1})^i$ can be perfectly matched to vertices of $\cup (U'_{L_2'})^i$.

It remains to argue that for all copies of the medium stage instance vertex sets copies $T_1$ and $T_2'$ can be matched to their corresponding $(U \cup L_1) \setminus U_{L_1}$ and $(U' \cup L_2') \setminus U_{L_2'}'$ sets unmatched so far. Without loss of generality focus on the former pair. We know that there exist a perfect matching $M^*_{T_1-L_1}$ between $T_1$ and $L_1$ from the definition of the tight-loose structure and there is a complete bipartite sub-graph between $T_1$ and $U$. Let $M^*_{T_1-(L_1 \setminus U_{L_1})}$ be denoted to the restriction of this perfect matching to vertices of $L_1 \setminus U_{L_1}$. Match the $T_1$ endpoints of $M^*_{T_1-(L_1 \setminus U_{L_1})}$ through the edges of $M^*_{T_1-(L_1 \setminus U_{L_1})}$. The remaining vertices of $T_1$ can be arbitrarily matched within the complete subgraph $T_1 \cup (U \setminus U_{L_1})$.

\end{proof}

Finally, with the following claim upper bounding the performance of the algorithm, we may conclude the lemma.

\medskip

\begin{claim}
The algorithms output at the end of the edge arrival sequence can have total weight at most $n^2 \cdot (4y+2x +\lambda \alpha(1-\frac{2y-1+2x}{\lambda+1}) )$.

\end{claim}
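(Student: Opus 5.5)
The plan is to decompose the final output weight, per copy of the medium-stage instance, into a \emph{frozen} part and a \emph{recoverable} part. The recoverable part is what the algorithm can add when the $\alpha$-hard instance is inserted between the meta vertices. Everything is summed over the $n$ copies, which produces the factor $n^2$. Counting edge weight once per edge, each copy contains four groups.
\begin{itemize}
\item The edges of $T_1$ and $T_2$ in the original tight-loose instance, together with the new $T_1$--$U$ edges.
\item The corresponding edges of $T_1'$, $T_2'$ and $U'$ in the copy.
\item The edges created inside $\bigcup(U_{L_1})^i \cup \bigcup(U'_{L_2'})^i$ by the final insertion.
\item Nothing else: by the preemption conventions of Section~\ref{sec:frozen}, edges incident to frozen vertices are never preempted and never grow.
\end{itemize}

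The first step bounds the weight of one tight-loose copy before the final insertion. The tight-loose instance carries total weight $2ny$. Adding $U$ lets $T_1$ absorb its residual $nx_1$, so the weight incident to $T_1 \cup T_2$ plus $L_1\cup U$ in the first half is at most $2ny + nx_1$. Symmetrically, the second half gives at most $2ny + nx_2$. Together a medium-stage copy has weight at most $n(4y + 2x)$. This is exactly the term $4y+2x$, so before the final insertion the total over all copies is $n^2(4y+2x)$.

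The second step bounds the gain from the final insertion. The edges of the $\alpha$-hard instance are inserted only among the $2\lambda n^2$ vertices of $\bigcup(U_{L_1})^i$ and $\bigcup(U'_{L_2'})^i$. By Claim~\ref{cl:Uweight}, each of these vertices already carries frozen average weight at least $r=(2y-1+2x)/(\lambda+1)$. This weight sits on edges to frozen vertices ($T_1$, $T_2'$, or edges $T_1$--$L_1$ with $T_1$ frozen), so it cannot be reused. The new weight the algorithm can place is therefore at most the matching value achievable by an algorithm facing an $\alpha$-hard instance in which every vertex has residual capacity at most $1-r$ on average.

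Here I would invoke Lemma~\ref{lem:meta} (Section~\ref{sec:meta}). The meta-vertex simulation turns the capacities $(U_{L_1})^i$ into meta vertices whose residual capacity is $\lambda n(1-r)$ in aggregate, with OPT of the meta instance equal to $\lambda n^2$ in vertex-pairs. Rescaling, the $\alpha$-hardness gives an added weight at most $\alpha \cdot \lambda n^2 (1-r) + o(n^2)$. A rescaled fractional algorithm on the meta graph with capacities $1-r$ would otherwise beat $\alpha$. Adding the two steps yields the bound $n^2\bigl(4y+2x+\lambda\alpha(1-r)\bigr)$.

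The main obstacle is the second step. Residual capacities are non-uniform across vertices, and the algorithm might also preempt frozen-side weight inside $U_{L_1}$ to free capacity. To handle preemption, I would argue that any weight freed at a vertex of $U_{L_1}$ came from an edge to a frozen vertex. Freeing it loses exactly what it gains at that endpoint, so the net weight is at most as if capacities were unchanged. For non-uniformity, I would rely on the averaging over copies in Lemma~\ref{lem:meta}, which replaces each meta vertex's capacity by its average. Concavity, in the sense that the $\alpha$-bound scales linearly in the uniform capacity, then lets the average residual $1-r$ be used.
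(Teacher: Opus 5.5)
Your proposal is correct and follows the paper's proof. You bound the pre-insertion weight by $4yn+2nx$ per medium-stage copy (tight-loose weight $2ny$ per half, plus the residuals $nx_1$ and $nx_2$ absorbed via $U$ and $U'$) and multiply by the $n$ copies. You then apply Lemma~\ref{lem:meta} with $N=n$, $k=\lambda n$ and $r=(2y-1+2x)/(\lambda+1)$ from Claim~\ref{cl:Uweight}, which bounds the final gain by $n^2\lambda\alpha(1-r)$.

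The only imprecision is in your last paragraph. Non-uniform residuals are handled neither by averaging over copies nor by a concavity argument. They are handled by aggregating the $\lambda n$ vertices inside each meta vertex: each $(U_{L_1})^i$ has total residual at most $\lambda n(1-r)$, and that is exactly what the rescaling by $k(1-r)$ in Lemma~\ref{lem:meta} uses.
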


\begin{proof}

After the initialization of the tight-loose instance by Lemma~\ref{lem:tightbaseinstance} the competitive ratio of the algorithm is at most $y$, hence the algorithms output has weight at most $2yn$. Once its copied the first time this jumps to $4yn$. When the complete bipartite sub-graphs between $U,T_1$ and $U',T_2'$ are inserted the algorithm may increase its output weight by $n(x_1+x_2) = 2nx$ to $4yn + 2nx$. At this point the medium stage instance is copied $n-1$ times blowing this value up by $n$.

Finally, by Lemma~\ref{lem:meta} when the $\alpha$-hard instance is inserted into the meta graph $\mathcal{G'}$ the algorithm may increase the total weight of its output by $n^2\alpha \lambda(1-\frac{2y-1+2x}{\lambda+1})$. Recall that all sets of vertices in $(U_{L_1})^i, (U_{L_2'}')^i : i \in [n]$ have average weight at least $\frac{2y-1+2x}{\lambda+1}$ before the insertion of the $\alpha$-hard instance by Claim~\ref{cl:Uweight}.

\end{proof}

\label{sec:recursion}

\subsection{Concluding Theorem~\ref{thm:main}}

\label{sec:concludingmain}

In this section we finally conclude our main Theorem~\ref{thm:main}. This is achieved through a precise approximation of what competitive ratio upper bound can be established using Lemma~\ref{lem:recursion}. Define $\alpha_0 = 0.5661$, the constant in our main theorem.

\medskip

   \begin{claim}\label{numeric}
        Let $x_{max} = 2 \cdot (1-\alpha_0 \cdot (1+\ln 2))$, then for any $\alpha\in[\alpha_0,\frac{1}{1+\ln(2)}]$ we have
        $$\min_{\lambda \in \mathbb{Z}_{>0}} \max_{x \in [0,x_{\max}]} \frac{4 \cdot \frac{1-x/2}{1 + \ln 2} +  2x + \lambda \alpha \cdot (1-\frac{2 \cdot \frac{1-x/2}{1 + \ln 2} -1 +2x}{\lambda + 1}) }{\lambda + 4}\leq \alpha - \epsilon,$$
        for fixed constant $\epsilon = 10^5$.
    \end{claim}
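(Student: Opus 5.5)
Throughout I read the constant as $\epsilon = 10^{-5}$; the printed $10^5$ must be a typo, since the left-hand side lies in $[0,1]$. Write $c = \tfrac{1}{1+\ln 2}\approx 0.59067$. It suffices to exhibit a \emph{single} integer $\lambda$ such that, for all $\alpha\in[\alpha_0,c]$ and all $x\in[0,x_{\max}]$, the bracketed ratio is at most $\alpha-\epsilon$. The minimum over $\lambda$ is then bounded by this choice.

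Multiplying by $\lambda+4>0$, the target inequality becomes $D_\lambda(\alpha,x)\le -(\lambda+4)\epsilon$, where
\[
D_\lambda(\alpha,x) \;=\; 4c\Bigl(1-\tfrac{x}{2}\Bigr) + 2x - 4\alpha - \frac{\lambda}{\lambda+1}\,\alpha\,\Bigl(2c\Bigl(1-\tfrac{x}{2}\Bigr)-1+2x\Bigr).
\]
The key structural observation is that $D_\lambda$ is affine in $x$ for fixed $\alpha$ and affine in $\alpha$ for fixed $x$, so it is bilinear. A bilinear function on the box $[\alpha_0,c]\times[0,x_{\max}]$ attains its maximum at one of the four corners. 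The whole claim therefore reduces to four numerical evaluations for one well-chosen $\lambda$.

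Next I would locate the binding corner. At $\alpha=c$ both corners are comfortably negative for any large $\lambda$:
\begin{itemize}
\item at $x=0$ the expression is $-\tfrac{\lambda}{\lambda+1}c(2c-1)<0$;
\item at $x=x_{\max}\approx 0.0830$ it is about $0.068-0.176\cdot\tfrac{\lambda}{\lambda+1}$.
\end{itemize}
At $\alpha=\alpha_0$ the two corners are tighter:
\begin{itemize}
\item at $x=0$ we get roughly $0.0983-0.1027\cdot\tfrac{\lambda}{\lambda+1}$, which requires $\lambda\ge 23$;
\item at $x=x_{\max}$ we get roughly $0.1662-0.1689\cdot\tfrac{\lambda}{\lambda+1}$, which requires $\lambda\gtrsim 62$.
\end{itemize}
So the binding corner is $(\alpha_0,x_{\max})$. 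There, $D_\lambda/(\lambda+4)$ is about $-10^{-5}$ for $\lambda$ in the low hundreds; for example, $\lambda=100$ gives about $-1.0\cdot 10^{-5}$. The margin shrinks again as $\lambda\to\infty$, because the division by $\lambda+4$ dominates.

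The main obstacle is precisely this last corner. The slack is of order $10^{-5}$, which is the same order as $\epsilon$. Hence I would redo the corner evaluation with rigorous error control rather than the 4-digit estimates above:
\begin{itemize}
\item use interval bounds on $\ln 2$, and hence on $c$ and $x_{\max}$, carried to about 8 significant digits;
\item scan $\lambda$ in a range such as $80$ to $200$ to pick the value maximizing $-D_\lambda(\alpha_0,x_{\max})/(\lambda+4)$;
\item confirm that this maximal margin is at least $10^{-5}$.
\end{itemize}
If it turns out marginally smaller, the fallback is to observe that any fixed positive $\epsilon$ suffices for the subsequent iteration of Lemma~\ref{lem:recursion}. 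In that case I would state the claim with the verified margin instead.
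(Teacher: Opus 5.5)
Your proposal is correct and is essentially the paper's proof. The paper substitutes $y=\frac{1-x/2}{1+\ln 2}$ and shows the expression minus $\alpha$ is decreasing in $\alpha$, which replaces your check of the two corners with $\alpha=\frac{1}{1+\ln 2}$; it then uses linearity in $y$ to reduce to the endpoints and fixes $\lambda=129$, obtaining about $-1.12\cdot 10^{-5}$ at your binding corner $(\alpha_0,x_{\max})$ and about $-2.81\cdot 10^{-5}$ at $(\alpha_0,0)$. The margin at the binding corner is about $1.13\cdot 10^{-5}$ at its best ($\lambda\approx 121$) and about $1.08\cdot 10^{-5}$ at your $\lambda=100$, so the fallback of weakening $\epsilon$ is not needed; note also that $\frac{1}{1+\ln 2}\approx 0.590616$, not $0.59067$, a slip that only affects the non-binding corner $(\alpha_0,0)$.
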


Observe that Claim~\ref{numeric} combined with Lemma~\ref{lem:recursion} and the $1/(1+\ln 2)$-hard instance of Epstein et al. \cite{EpsteinLSW18Weaker} implies our main theorem. If we set any $\alpha \in [\alpha_0, 1/(1+\ln2)]$ to be the competitive ratio shown by an $\alpha$-hard instance, the claim and lemma together imply the existence of an $(\alpha - 10^{-5})$-hard instance.

    \begin{proof}
       Define $y = \frac{1-x/2}{1 + \ln 2}$. Note that $\alpha_0 = \frac{1-\frac{x_{max}}{2}}{1 + \ln 2} $. The expression can now be re-stated as 
       $$\min_{\lambda \in \mathbb{Z}_{>0}} \max_{y \in [\alpha_0,\frac{1}{1+\ln(2)}]} \frac{4(1-y\ln(2)) + \lambda \alpha  (\frac{\lambda -2+y(2+4\ln(2))}{\lambda + 1}) }{\lambda + 4}-\alpha\leq  - \epsilon$$
Fix $\lambda = 129$. Note that the choice of the constant $\lambda$ is sensitive, but any sufficiently large but comparably smaller value then $1/\epsilon$ suffices. Consider the function,

$$f(\alpha,y)=\frac{4(1-y\ln(2)) + \lambda \alpha  (\frac{\lambda -2+y(2+4\ln(2))}{\lambda + 1}) }{\lambda + 4}-\alpha$$

In order to describe the tightest case of the initial min-max expression we need to consider when $f(\alpha,y)$ is maximized. Hence, we examine $\frac{d f(\alpha,y)}{d \alpha}$.

\begin{align*}
\frac{d f(\alpha,y)}{d \alpha} & = \frac{\lambda \cdot \left(\frac{\lambda-2+y(2+4\ln2)}{\lambda + 1}\right)}{\lambda + 4} - 1 \nonumber \\
& \leq \frac{6y-6}{\lambda + 4} \nonumber \\
& \leq \frac{\frac{6}{1+\ln 2} - 6}{\lambda + 4} \nonumber \\
& \leq 0 \nonumber
\end{align*}

The second inequality follows from $y \leq 1/(1+\ln 2)$. This implies that the expression is maximized when $\alpha$ is minimized so we can proceed with assuming $\alpha = \alpha_0$. $f(\alpha_0,y)$ is linear with respect to the parameter $y$. Consequently, it must obtain its maximum on one of the endpoints of the interval range so its sufficient to prove that

$$\max_{y \in \{\alpha_0,\frac{1}{1+\ln 2} \}} f(\alpha_0,y) \leq -10^{-5}$$

Via machine its easy to check that $f(\alpha_0,\alpha_0) \sim -1.12 \cdot 10^{-5}$ and $f(\alpha_0,1/(1+\ln 2)) \sim -2.81 \cdot 10^{-5}$ concluding the proof.  
    \end{proof}

\section{Technical Notes}
\label{sec:notes}

\subsection{Recursing on Meta-Graphs}
\label{sec:meta}

The final step in the proof of Lemma~\ref{lem:recursion} crucially relies on the assumption that an $\alpha$-hard instance can be used to bound how much additional weight an algorithm may place on vertices that already carry frozen weight. In this section, we formalize this assumption.

\medskip

\begin{lemma}
\label{lem:meta}
Suppose an online preemptive fractional matching algorithm is fed a bipartite graph $G = (L \cup R, E)$ together with an edge arrival sequence, and at some point produces a fractional matching $w : E \rightarrow [0,1]$. Assume further that the following conditions hold:
\begin{itemize}
    \item There exist pairwise disjoint independent sets $L_1,\dots,L_N \subseteq L$ and $R_1,\dots,R_N \subseteq R$, each of size~$k$;
    \item For every $i \in [N]$, all neighbors of vertices in $L_i \cup R_i$ have been frozen;
    \item For every $i \in [N]$, the average vertex weight induced by $w$ on $L_i \cup R_i$ is at least~$r$;
    \item There exists a bipartite $\alpha$-hard online preemptive fractional matching instance on $N$ vertices.
\end{itemize}
Then there exists a graph $(\bigcup_{i \in [N]} L_i \cup R_i, E_\alpha)$ and an associated edge arrival sequence such that:
\begin{enumerate}
    \item $E_\alpha$ contains a perfect matching between $\bigcup_{i \in [N]} L_i$ and $\bigcup_{i \in [N]} R_i$, and
    \item If the algorithm’s input is extended by this arrival sequence, the total weight of its output can increase by at most $(1-r)k\alpha N$.
\end{enumerate}
\end{lemma}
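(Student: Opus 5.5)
The plan is to simulate the $\alpha$-hard instance on a meta graph whose vertices are the groups $L_i$ and $R_i$, and to turn any algorithm on the blown-up instance into an algorithm on the original $\alpha$-hard instance. Let $H=(A\cup B, F)$ with arrival sequence $\sigma_H$ be the assumed bipartite $\alpha$-hard instance on $N$ vertices per side; the intended reading is that $H$ has $N$ meta vertices on each side, one for each $L_i$ and each $R_i$. Identify $a_i\in A$ with $L_i$ and $b_j\in B$ with $R_j$. Whenever $(a_i,b_j)$ arrives in $\sigma_H$, the adversary inserts the complete bipartite graph $K_{k,k}$ between $L_i$ and $R_j$, in some fixed order. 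Property~1 is immediate: $H$ has a perfect matching, and each meta edge is a $K_{k,k}$ containing a perfect matching between the two blocks, so $E_\alpha$ has a perfect matching between $\bigcup L_i$ and $\bigcup R_i$.

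For property~2, I would project the algorithm's state onto $H$. Since all old neighbours of $L_i\cup R_i$ are frozen, the preemption conventions of Section~\ref{sec:frozen} let us assume that the old edge weights never decrease. The extra weight placed on new edges then comes only from residual capacity. Define meta weights
\[
z_{ij}=\frac{1}{k}\sum_{u\in L_i,\,v\in R_j} w_{uv}
\]
over the new edges. At block $L_i$ we get $\sum_j z_{ij}\le \frac1k\sum_{u\in L_i}(1-w^{\mathrm{old}}_u) = 1-r_i$, where $r_i$ is the old average weight on $L_i$, and similarly at $R_j$. After rescaling, $z$ is a fractional matching in $H$ with vertex capacities $1-r_v$. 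Preemptions in the big graph project to decreases of $z$, so the map from online algorithms on the blown-up sequence to online algorithms on $\sigma_H$ with capacities $c_v=1-r_v$ preserves online feasibility. The added weight equals $k\sum z_{ij}$.

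The main obstacle is that the hypothesis bounds only the average over each pair $L_i\cup R_i$ by $r$, not each side, while $\alpha$-hardness is stated for unit capacities. For unequal capacities I would use a symmetrization trick, the ``multiple copies'' alluded to earlier. Take the product of $H$ with copies that permute which blocks play which meta vertex, for example by composing all cyclic shifts of the indexing. Then, averaged over copies, every meta vertex sees capacity at most $1-r$, and each copy is fed the same $\sigma_H$ simultaneously. The fallback is a direct argument: fractional algorithms on capacitated instances scale, so if all capacities were equal to $c=1-r$, dividing by $c$ gives a unit-capacity algorithm. Its value is then at most $\alpha N$ times the final optimum, i.e.\ at most $\alpha\cdot\mathrm{OPT}(H)\cdot c$.

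For unequal capacities I would argue that the adversary, who sees the blocks, can relabel blocks so that each meta edge pairs groups whose combined capacity averages to $1-r$. Because every edge of $H$ has one endpoint on each side, the increase $\sum_{ij} z_{ij}$ is bounded by the average capacity over a matched pair. This is the step I expect to need the most care, and I would first try to verify the bound using the pairing $L_i \leftrightarrow R_i$ given in the hypothesis, together with $\mathrm{OPT}(H)=N$. Combining the pieces, the total increase is at most $k\cdot\alpha N(1-r)$, up to $o(1)$ terms.
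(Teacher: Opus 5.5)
Your construction (meta-graph, $K_{k,k}$ blow-up, projection $z_{ij}$, old weights fixed because neighbours are frozen) matches the paper's, and Property~1 is fine. The gap is in the one step that carries content: getting the factor $\alpha$ when the block capacities $c_v=1-\bar r_v$ differ.

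Hardness of $H$ only constrains unit-capacity algorithms. So you must divide $z$ by a single constant that dominates every $c_v$, which gives $\sum z\le \alpha N\max_v c_v$. Counting edges by sides gives $\sum z\le \min(\sum_i c_{L_i},\sum_i c_{R_i})\le N(1-r)$. That is the ``average capacity'' bound you mention, but it carries no $\alpha$.

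Your proposed fixes do not bridge these two bounds:
\begin{itemize}
\item \emph{Relabelling or cyclic shifts.} A meta edge must join an $L$-block to an $R$-block, to keep the bipartition and Property~1. So any relabelling only permutes capacities within a side. If every $L_i$ has average $r-\delta$ and every $R_i$ has average $r+\delta$, each left meta vertex has capacity $1-r+\delta$ under every labelling. For $0<\delta<(1-\alpha)(1-r)$, both bounds above exceed $\alpha N(1-r)$.
\item \emph{Superimposing relabelled copies of $H$ on the same blocks.} The copies share the blocks' capacity, so no single copy sees a unit-capacity algorithm. Moreover, a union of hard sequences on one vertex set is not known to be $\alpha$-hard.
\end{itemize}

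The paper's proof takes your route and settles this point in one sentence. It rescales by $1/(k(1-r))$ and asserts that, since the average on $L_i\cup R_i$ stays in $[r,1]$, the meta weights lie in $[0,1]$. That inference actually needs each of $L_i$ and $R_i$ \emph{separately} to have average at least $r$. So you have found a genuine looseness in the statement as written.

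However, the per-block condition is exactly what the application supplies: Claim~\ref{cl:Uweight} bounds both $U_{L_1}$ and $U'_{L_2'}$. Under that condition your ``fallback'' finishes at once, and equal capacities are not needed. Every $c_v\le 1-r$, so $z/(1-r)$ is a legal online unit-capacity fractional matching on $H$: weights sit only on arrived meta edges, and preemptions project to decreases. Hence $\sum z\le \alpha N(1-r)$, and the increase $k\sum z$ is at most $(1-r)k\alpha N$. So assume per-block averages, and drop the symmetrization and relabelling steps.
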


\begin{proof}
We construct a bipartite \emph{meta-graph} $\mathcal{G}$ with vertex sets $\mathcal{L}$ and $\mathcal{R}$, each of size~$N$. Each meta-vertex in $\mathcal{L}$ (respectively, $\mathcal{R}$) corresponds to one independent set $L_i$ (respectively, $R_i$) in~$G$.

We now insert the $\alpha$-hard instance into $\mathcal{G}$. Whenever an edge between meta-vertices corresponding to $L_i \in \mathcal{L}$ and $R_j \in \mathcal{R}$ arrives in the $\alpha$-hard instance, we insert a complete bipartite subgraph between the corresponding independent sets $L_i$ and $R_j$ in~$G$. This process defines the edge set $E_\alpha$ and its arrival sequence. Since the $\alpha$-hard instance admits a perfect matching, it follows that $E_\alpha$ contains a perfect matching between $\bigcup_{i \in [N]} L_i$ and $\bigcup_{i \in [N]} R_i$.

Next, we relate the behavior of the algorithm on $G$ to an induced fractional matching on $\mathcal{G}$. Whenever the algorithm increases or decreases the weight of an edge $(u,v)$ with $u \in L_i$ and $v \in R_j$ by an amount $\Delta \in [0,1]$, we apply a corresponding change of $\Delta / (k(1-r))$ to the weight of the edge between the meta-vertices representing $L_i$ and $R_j$, in the same direction.

We first observe that these updates define a valid execution of an online preemptive fractional matching algorithm on the meta-graph $\mathcal{G}$ under the $\alpha$-hard arrival sequence. Since all neighbors of vertices in each $L_i \cup R_i$ are frozen in~$G$, we may assume that the algorithm never preempts weight from edges outside of $E_\alpha$ during the arrivals of $E_\alpha$. Consequently, the average weight of vertices in each $L_i \cup R_i$ remains within the interval $[r,1]$ throughout this process, which implies that the induced vertex weights in $\mathcal{G}$ always lie in $[0,1]$.

Because the instance inserted into $\mathcal{G}$ is $\alpha$-hard, the total weight of the induced fractional matching on $\mathcal{G}$ is at most $\alpha N$. By construction, this implies that the total increase in weight across all edges of $G$ during the arrivals of $E_\alpha$ is at most $(1-r)k\alpha N$, completing the proof.
\end{proof}

\subsection{Comparing Vertex-Future Edge and Generalized Vertex Arrivals}
\label{sec:ranking}

In this section, we justify claims regarding the relationship between the vertex-future edge arrival model and the generalized vertex arrival model~\cite{WangW15}, as well as the competitive ratios achievable in the former.

In the standard definition of generalized vertex arrivals, vertices are associated with arrival times and deadlines. When a vertex arrives, it reveals all edges incident to previously arrived vertices. At its deadline, the algorithm must irrevocably decide whether to match the vertex to an available neighbor. It is typically assumed that by the time of its deadline, a vertex has observed all of its neighbors.

A generalized vertex arrival sequence can be simulated within the vertex-future edge arrival model by setting the arrival time of each vertex in the latter equal to its deadline in the former. Under this transformation, all edges incident to a vertex are revealed before the time at which a matching decision must be made, establishing the claimed relationship between the two models.

The converse, however, does not hold. In the generalized vertex arrival model, for a vertex to observe all of its neighbors by its deadline, all of those neighbors must have arrived beforehand and revealed their incident edges. In particular, even in bipartite graphs—where edges do not run between neighbors of the same vertex—this constraint persists: once all neighbors of a vertex have arrived, any subsequently arriving vertices must reveal all edges incident to the existing vertices. This restriction does not apply in the vertex-future edge arrival models.

Finally, we argue that known positive results for the generalized vertex arrival model extend to the vertex-future edge arrival model. The randomized integral and deterministic fractional algorithms of Huang et al.~\cite{HuangPTTWZ19Stronger} are based on the \emph{Ranking} and \emph{Water-Filling} frameworks. At a high level, these algorithms rely only on local information, namely the edges revealed so far and the matched status or weight of neighboring vertices. Since this information is also available in the vertex-future edge arrival model, their competitive ratio guarantees carry over unchanged.

\appendix

\section{Missing Proofs}

\subsection{Proof of Claim~\ref{cl:tight:eq1}}
\label{app:tightnodes}
During phase $\ell$ the algorithm can only place weight on edges between $T_\ell$ and $V_{\ell+1}$. Let $\Res(T_\ell,i)$ stand for the amount of weight the algorithm places on the edges to $V_\ell$ of the $i$-th vertex in the ordering of $T_\ell$ when exploiting its residual weight. Therefore, $\sum_{i \in [n]}\Res(T_\ell,i) = \Res(T_\ell)$. Further define, with a slight trick in notation, that $\Pre(T_\ell,i)$ stands for amount the weight the algorithm places on the same edges through preempting from the previous layer $\ell-1$. Hence, $\sum_{i \in [n]} \Pre(T_\ell,i) = \Pre(T_{\ell-1})$.

At the start of the phase vertices of $V_{\ell + 1}$ have weight $0$ and they are all assigned to $T_{\ell+1}$. As edges of vertices of $T_\ell$ arrive the lightest vertex of $T_{\ell+1}$ always moves to $L_{\ell + 1}$. Let $w(i)$ stand for the weight of the $i$-th vertex of $V_{\ell+1}$ which was moved to $L_{\ell+1}$.

Up until and including the time when the algorithm has reached the $k$-th vertex in $T_L$ the algorithm has placed $\sum_{i=1}^{k}\Pre(T_\ell,i) + \Res(T_\ell,i)$ weight on the edges running between $T_\ell$ and $V_{\ell+1}$. Out of $V_{\ell +1}$ the set $L_{\ell+1}$ contains the $k-1$ lightest vertices when the $k$-th vertex is added to $L_{\ell+1}$. As at this point there are $2n-i+1$ vertices in $T_{\ell+1}$:

\begin{align}
w(k) \leq \frac{(\sum_{i=1}^{k}\Pre(T_\ell,i) + \Res(T_\ell,i) - w(i)) + w(k)}{2n-i + 1}\quad\text{for all}\quad k \in [n] \label{eq:tight:1} 
\end{align}

For $k\in[n]$, define
\[
W_k := \sum_{i=1}^k w(i)
\qquad\text{and}\qquad
A_k := \sum_{i=1}^k \bigl(\Pre(T_\ell,i)+\Res(T_\ell,i)\bigr).
\]
\[
S_k := A_k - W_k = \sum_{i=1}^k \bigl(\Pre(T_\ell,i)+\Res(T_\ell,i)-w(i)\bigr).
\]

Inequality~\ref{eq:tight:1} could be re-stated as,
\begin{align}
w(k) &\le \frac{S_k - w(k)}{2n-k+1} \nonumber\\
&=   \frac{A_k - W_{k-1}}{2n-k+1}\label{eq:tight:2}
\end{align}

Our goal will be to upper bound $W(n)$. By Inequality~\ref{eq:tight:2}

\begin{align}
W(n) & = W(n-1)+w(n) \nonumber \\
& \leq \frac{2n-n}{2n-n+1}W(n-1)
+ \frac{A_k}{2n-n+1} \nonumber \\
& \leq \sum_{i=1}^n
\frac{A_i}{2n-i+1}
\prod_{j=i+1}^n \frac{2n-j}{2n-j+1} \nonumber \\
& \leq \sum_{i=1}^n
\frac{n-i-1}{2n-i+1}\,A_i \nonumber
\end{align}

Here we exploited that the product telescopes. We define $c_i := \frac{n-i-1}{2n-i+1}$ and open up $A_i$ to get,

\begin{equation}
W(n)
\le
\sum_{t=1}^n c_i\bigl(\Pre(T_\ell,i)+\Res(T_\ell,i)\bigr) \nonumber
\end{equation}

Since $c_i \le \tfrac12$ for all $i$ for the term concerning preemption we get,
\begin{equation}
\sum_{i=1}^n c_i\,\Pre(T_\ell,i)
\le
\frac12\sum_{i=1}^n \Pre(T_\ell,i)
= \frac{1}{2}\Pre(T_\ell)
\nonumber
\end{equation}

In order to upper bound the term coming from residual weights observe that the sequence $c_i$ is decreasing, while $\Res(T_\ell,i)$ is increasing. By Abel summation,
\[
\sum_{i=1}^n c_i\,\Res(T_\ell,i)
=
c_n\sum_{i=1}^n \Res(T_\ell,i)
+
\sum_{t=1}^{n-1}(c_t-c_{t+1})
\sum_{i=1}^t \Res(T_\ell,i).
\]
Since $\Res(T_\ell,i)$ is increasing and bounded by $1$,

\begin{equation}
\sum_{i=1}^n c_i\,\Res(T_\ell,i)
\le
\frac1n\Bigl(\sum_{i=1}^n c_i\Bigr)
\sum_{i=1}^n \Res(T_\ell,i).
\nonumber
\end{equation}

By properties of harmonic series we have that,
\[
\sum_{i=1}^n c_i
=
n-(n+2)\sum_{m=n+3}^{2n+2}\frac1m
=
n(1-\ln 2)+O(1).
\]
Substituting this into the previous equation we get:
\begin{equation}
\sum_{i=1}^n c_i\,\Res(T_\ell,i)
\le
(1-\ln 2)\sum_{i=1}^n \Res(T_\ell,i)
+o(n).
\nonumber
\end{equation}

Combining our bounds on the preemptive and residual values we can finally upper bound $W(n)$

\begin{align}
W(n) & = \sum_{i=1}^n w(i) \nonumber \\
& \le (1-\ln 2)\sum_{i=1}^n \Res(T_\ell,i) + \frac12\sum_{i=1}^n \Pre(T_\ell,i) + o(n) \nonumber \\
& = (1-\ln 2) \Res(T_\ell) + \frac12\Pre(T_{\ell-1}) \nonumber
\end{align}

During phase $\ell$ the algorithm distributes of $\Res(T_\ell) + \Pre(T_{\ell-1})$ on the edges running between layer $\ell$ and $\ell+1$, and during phase $\ell$ it does not preempt from the same edges. As at most $W(n)$ of this weight is placed on vertices of $L_{\ell+1}$ and the rest is on $T_{\ell+1}$ we can conclude that,

\begin{align}
n - \Res_T(\ell+1) & \geq \Res(T_\ell) + \Pre(T_{\ell-1}) - W(n) \nonumber \\
& \geq \ln 2\Res(T_\ell) + \frac12\Pre(T_{\ell+1}) - o(n)\nonumber
\end{align}

\subsection{Copying the Instance}
\label{sec:copy}

Certain steps in the proof of Lemma~\ref{lem:recursion} rely crucially on the assumption that, at any point in time, the adversary may create an arbitrary number of copies of the current input instance on which the algorithm behaves identically. As this assumption may appear nonstandard, we justify it here at a high level.

We note that our recursive argument can, in principle, be carried out even without assuming that the resulting copies are perfectly identical. We nevertheless adopt this assumption in order to simplify the presentation.

Suppose the adversary wishes to create $N$ copies of the current state of the input graph and the algorithm’s output on an instance with $n$ vertices, in order to establish an upper bound of~$\alpha$ on the competitive ratio of a deterministic online preemptive fractional matching algorithm. For simplicity, assume that the value of an optimal offline solution for the instance is at least $\Omega(n)$.

We first consider a discretized variant of the online preemptive fractional matching problem. In this model, edges arrive online as usual, and the algorithm may update its solution both online and via preemption. However, every update must assign to each edge a weight that is a multiple of $1/\gamma$, for some sufficiently large predetermined parameter~$\gamma$.

Under this discretization, after the arrival of an edge the algorithm has only finitely many possible actions: it may increase or decrease the weight of any of the $O(n^2)$ edges by an integer multiple of $1/\gamma$. As a result, after any update the algorithm has at most $O(\gamma^{O(n^2)})$ possible distinct actions. Since there are $O(n^2)$ edge arrivals in total, the algorithm can produce at most
\[
O\bigl(n^{\gamma^{O(n^2)}}\bigr)
\]
distinct outputs over the entire execution.

The precise bound is not important; what matters is that it is a function of $n$ and~$\gamma$. We now choose $\gamma = \mathrm{poly}(n)$ with $\gamma \gg n^2$. Observe that for any continuous fractional algorithm, there exists a discretized algorithm that at all times produces a fractional matching differing by at most $1/\gamma$ per edge.

If the original (continuous) algorithm has competitive ratio~$\alpha$, then the discretized algorithm achieves competitive ratio $\alpha - O(n^2/\gamma) = \alpha - o(1)$. Thus, from a hardness perspective, we may assume without loss of generality that the algorithm is discretized using a sufficiently large parameter~$\gamma$. In particular, for a fixed input sequence on $n$ vertices, the algorithm can reach only some large but finite number $f(n)$ of distinct outputs.

Now suppose the adversary wishes to force the algorithm to produce $N$ identical copies of its output. It suffices for the adversary to generate $f(n)^2 N^3$ independent copies of the current input instance. We may further assume that after each edge insertion, the algorithm only updates weights on incident edges, as any additional preemptions can be postponed to a later time. Under this assumption, across all $f(n)^2 N^3$ copies, the discretized algorithm can behave in at most $f(n)$ distinct ways.

We classify copies into \emph{types}, where two copies belong to the same type if the algorithm produces identical fractional matchings on them. Any type containing fewer than $f(n)N^2$ copies can be ignored by the adversary, as such copies contribute at most $O(n f(n)^2 N^2)$ total weight to the algorithm’s solution, which is negligible compared to the inflated optimal solution.

For each type containing at least $f(n)N^2$ copies, the adversary can partition these copies into groups of exactly $N$, discarding at most $O(N)$ leftover copies per type. The total contribution of all discarded copies remains bounded by $O(n f(n)^2 N^2)$.

As a result, the adversary can extract $N$ identical copies of the instance on which the algorithm behaves identically. Since the value of the optimal solution under this amplification grows to $\Omega(n f(n)^2 N^3)$, the ignored portions of the instance affect the algorithm’s competitive ratio by only an $o(1)$ term. This justifies the assumption that the adversary may freely create identical copies of the instance in our hardness construction.

\bibliographystyle{alpha}
\bibliography{ref}

\end{document}